\renewcommand{\(}{\left(}
\newcommand{\1}{\mathds{1}}
\newtheorem{theorem}{Theorem}[section]
\newtheorem{proposition}{Proposition}[section]
\newtheorem{lem}{Lemma}[section]
\newtheorem{co}{Corollary}[section]
\newtheorem{re}{Remark}[section]
\newcommand{\be}{\begin{eqnarray}}
\newcommand{\ee}{\end{eqnarray}}
\newcommand{\by}{\begin{eqnarray*}}
\newcommand{\ey}{\end{eqnarray*}}
\newcommand{\bt}{\begin{theo}}
\newcommand{\et}{\end{theo}}
\newcommand{\bl}{\begin{lem}}
\newcommand{\el}{\end{lem}}
\newtheorem*{lem1}{Lemma A.1}
\newtheorem*{lem2}{Lemma A.2}
\newtheorem*{lem3}{Lemma A.3}
\newcommand{\bc}{\begin{co}}
\newcommand{\ec}{\end{co}}
\newcommand{\eex}{\end{exa}\vspace{-3mm}}
\newcommand{\br}{\begin{re}}
\newcommand{\er}{\end{re}\vspace{-3mm}}
\renewcommand{\geq}{\geqslant}
\renewcommand{\leq}{\leqslant}
\begin{document}

\bibliographystyle{econometrica}
\citationstyle{agsm}

\title{Omega risk model with tax}
\author{   \textsc{Zhenyu Cui} \footnote{Corresponding Author. Zhenyu
Cui is with the department of Mathematics at Brooklyn College of the City University of New York, Ingersoll Hall, 2900 Bedford Ave, Brooklyn, NY11210, United States.   Tel.: +1718-951-5600, ext. 6892
              Fax: +1718-951-4674.
              Email: \texttt{zhenyucui@brooklyn.cuny.edu} } \ }

\date{Draft: \today }
\maketitle

\begin{abstract}
In this paper we study the Omega risk model with surplus-dependent tax payments in a time-homogeneous diffusion setting. The new model incorporates practical features from both the Omega risk model(Albrecher and Gerber and Shiu \citeyear{AGS11}) and the risk model with tax(Albrecher and Hipp \citeyear{AH07}). We explicitly characterize the Laplace transform of the occupation time of an Azema-Yor process(e.g. a process refracted by functionals of its running maximum) below a constant level until the first hitting time of another Azema-Yor process or until an independent exponential time. This result unifies and extends recent literature(Li and Zhou \citeyear{LZ13} and Zhang \citeyear{Z13}) incorporating some of their results as special cases. We explicitly characterize the Laplace transform of the time of bankruptcy in the Omega risk model with tax and discuss an extension to integral functionals. Finally we present examples using a Brownian motion with drift.
%For the Omega risk model in risk analysis, please refer to Albrecher and Gerber and Shiu \citeyear{AGS11}, Gerber, Shiu and Yang \citeyear{GSY12}).  For the diffusion risk model with a value-dependent tax rate, please refer to Albrecher and Hipp \citeyear{AH07}, .
\end{abstract}
\vspace{0.5cm}

\noindent{{\bf Math Subject Classification} 60G44 \and 91B25 \and 91B70}

\noindent\textbf{Key-words:} Time-homogeneous diffusion; Azema-Yor process; occupation time; Laplace transform; risk model with tax; Omega risk model.
\newpage

%\keywords{Time-homogeneous diffusion; Azema-Yor process; occupation time; Laplace transform; risk model with tax; Omega risk model}\vspace{3mm}
%
%\ams{60G44}{91B25; 91B70} % insert the primary Maths Subject Classification number in the first bracket
%         % and the secondary ams number(s) in the second bracket
%         % e.g. \ams{60E20}{49G03;49F10}

\newpage

\section{Introduction}

The Omega risk model was first introduced by Albrecher, Gerber and Shiu \citeyear{AGS11}, and it distinguishes the ruin time(negative surplus) from the time of bankruptcy of a company(occupation time  of the negative surplus exceeds a \textit{grace period}).

The risk model with tax was first introduced in Albrecher and Hipp \citeyear{AH07}, where a constant tax rate is applied to the compound Poisson risk model at profitable times. In a time-homogeneous diffusion setting, Li, Tang and Zhou \citeyear{LTZ13} introduce a diffusion risk model with tax and model the ruin time of the company by its two-sided exit time. In the Levy insurance model with tax, Kyprianou and Zhou \citeyear{KZ09} obtain explicitly the two-sided exit time, the expected present value of tax until ruin, and the generalized Gerber-Shiu function.  Renaud \citeyear{R09} obtains explicit expressions of the distribution of the tax payments made over the lifetime of the company.

%The stochastic area till the first drawdown time represents the accumulated firm value till the drawdown event, and reflects the profitability and solvency of a firm during a financial crisis. The Azema-Yor stopping time is a generalized drawdown time and is a candidate solution to many optimal stopping problems (Graversen and Peskir \citeyear{GP97}, Pedersen \citeyear{ P05}, Shepp and Shiryaev \citeyear{SS93}), and also to the Skorokhod embedding problem (Obloj \citeyear{O04}).

 %We also consider the stochastic area till an Azema-Yor stopping time.

We make three contributions to the current literature. First, we obtain the Laplace transform of the occupation time of an Azema-Yor process below a constant level until the first hitting time of another Azema-Yor process or until an independent exponential time. This result unifies and extends recent literature(Li, Tang and Zhou \citeyear{LTZ13}, Li and Zhou \citeyear{LZ13} and Zhang \citeyear{Z13}) incorporating some of their results as special cases.
Second, we propose the ``Omega risk model with tax" to model the ruin and bankruptcy of an insurance company. This allows a more practical view in the modeling of bankruptcy, because an insurance company under distress is subject to tax, which may further weaken their solvency, and the company is considered bankrupt only when its surplus value is below a critical level beyond a ``grace period". We explicitly characterize the Laplace transform of the time of bankruptcy. Third, as an application of the main results, we obtain the Laplace transforms of the occupation times related to both the (absolute) drawdown and the \textit{relative} drawdown until respectively the first hitting time or an independent exponential time. We also discuss an extension to integral functionals through stochastic time change.
%Third, through stochastic time change, we the Omega risk model with tax and a general bankruptcy rate function $\omega(x)$. Previous literature considers the special case when $\omega(x)$ is a constant or a piecewise constant (Albrecher, Gerber and Shiu \citeyear{AGS11}, Gerber, Shiu and Yang \citeyear{GSY12}, and Li and Zhou \citeyear{LZ13}). We extend the literature to a general bankruptcy rate function and this provides more flexibility in modeling since $\omega(x)$ should be a decreasing function to comply with practical features of the model.
%Third, we prove some interesting identities between the risk models with and without tax, and obtain a number of invariant functionals and random times. Examples include the last passage time of the maximum for the diffusion with tax before the drawdown time, the speed of insurance reserve depletion(Zhang and Hadjiliadis \citeyear{ZH12}, Morales ), the first range time of diffusion with tax, and the frequency of drawdowns of diffusion with tax(Landriault, Li and Zhang \citeyear{LLZ14}).
%These invariance identities also hold for the case of a spectrally negative Levy process(SNLP).

%Third, we explicitly compute the expected ruin time in a diffusion risk model with value-dependent tax rate.

The paper is organized as follows: Section \ref{s1} reviews the preliminary results on the Omega risk model and the risk model with tax.  Section \ref{s2}
 gives the main results, namely the explicit Laplace transforms of the occupation time of an Azema-Yor process below a constant level until the first hitting time of another Azema-Yor process or until an independent exponential time. As an application, we propose the ``Omega risk model with tax", and determine the Laplace transform of the time of bankruptcy. We also discuss other interesting applications involving both the absolute and relative drawdown processes of the before-tax and after-tax processes, and the extension to a more general bankruptcy function. Section \ref{s3} provides examples using a standard Brownian motion with drift. Section \ref{s4} concludes the paper with future research directions.

\section{Preliminaries\label{s1}}

Recently, there are two strands of literature with one looking at a new definition of ``ruin", and the other considering a diffusion risk model refracted by its running maximum named the ``risk model with tax". We review relevant literature here and in Section \ref{s2} we will combine them to propose and study the ``Omega risk model with tax".

\subsection{The Omega risk model }
%Reference: Zhang \citeyear{Z13}, Li and Zhou \citeyear{LZ13}.

Classical ruin theory assumes that ruin or bankruptcy will occur at the first time when the surplus value of an insurance company is negative. For a pointer to the literature in this area, please refer to Gerber and Shiu \citeyear{GS98}. Recently, the ``Omega risk model" has been proposed and studied in a series of papers starting with Albrecher, Gerber and Shiu \citeyear{AGS11}. This model distinguishes between \textit{ruin} (negative surplus value) and \textit{bankruptcy} (going out of business). The company continues operation even with a period of negative surplus value, and is declared bankrupt if this period exceeds a threshold ``grace period". They introduce a bankruptcy rate function $\omega(x)$, where $x< 0$ denotes the value of negative surplus value, and it represents the probability of bankruptcy within $dt$ time units. The Omega risk model is based on the study of the occupation time of the risk process below a constant level. The occupation time of of a spectrally negative Levy process has been studied in Landriault, Renaud and Zhou (\citeyear{LRZ11} \citeyear{LRZ15}) and Loeffen, Renaud and Zhou \citeyear{LRZ14}. The occupation time of a refracted Levy process(Kyprianou and Loeffen \citeyear{KL10}) has been studied in Renaud \citeyear{R14}. This paper focuses on the diffusion risk model similar as in Li and Zhou \citeyear{LZ13}.

Given a complete filtered probability space $(\Omega, \mathcal{F}, \mathcal{F}_t, P)$ with state space $J=(l,\infty), -\infty\leq l <\infty$,  consider a $J$-valued regular time-homogeneous diffusion $X=(X_t)_{t\in[0,\infty)}$ which satisfies the stochastic differential equation(SDE)
\begin{equation}
dX_t=\mu(X_t)\, dt\, +\, \sigma(X_t)\,  dW_t, \quad X_0=x\in J, \label{G}
\end{equation}
where $W$ is a $\mathcal{F}_t$-Brownian motion and $\mu(\cdot)$ and $\sigma(\cdot)>0$ are Borel functions satisfying the following conditions: there exists a constant $C>0$ such that, for all $x_1, x_2\in J$
\begin{align}
\mid \mu(x_1)-\mu(x_2)\mid +\mid \sigma(x_1)-\sigma(x_2)\mid &\leq C\mid x_1-x_2\mid, \quad \mu^2(x_1)+\sigma^2(x_1)\leq C^2 (1+x_1^2), \label{cond1}
\end{align}
 Condition \eqref{cond1} guarantees that the SDE \eqref{G} has a unique solution that possesses the strong Markov property (see p.$40$, p.$107$, Gihman and Skorohod \citeyear{GS72}).

In the following, we denote $P_x(\cdot)\triangleq P(\cdot\mid X_0=x)$ and $E_x[\cdot]\triangleq E_x[\cdot\mid X_0=x]$.
Assume that the before-tax value of the company is modeled by $X$ with SDE \eqref{G}. If we introduce an auxiliary ``bankruptcy monitoring" process $N$ on the same probability space (with a possibly enlarged filtration to accommodate it), and assume that conditional on $X$, $N$ follows a Poisson process with state-dependent intensity $\omega(X_t) \1_{\lbrace X_t<0\rbrace }, t>0$. Define the time of bankruptcy $\tau_{\omega}$ as the first arrival time of the Poisson process $N$, i.e.
\begin{align}
\tau_{\omega}:=\inf\left\{t\geq 0: \int_0^t \omega(X_s)\1_{\lbrace X_s<0\rbrace}ds >e_1  \right\},
\end{align}
where $e_1$ is an independent exponential random variable with unit rate. Similar as in Li and Zhou \citeyear{LZ13}, for $\lambda>0$, we can express the Laplace transform of the time of bankruptcy as
\begin{align}
E_{x}[ e^{-\lambda \tau_{\omega}}]&=P_{x}(\tau_{\omega}<e_{\lambda})=1-E_{x}\left[ e^{-\int_0^{e_{\lambda}} \omega(X_s)\1_{\lbrace X_s<0\rbrace}ds } \right].\label{lz3}
\end{align}
% Similar as in Gerber, Shiu and Yang \citeyear{GSY12}, define the (total) \textit{exposure} as
%\begin{align}
%\mathcal{E}&:=\int_0^{\infty} \omega(X_s)\1_{\lbrace X_s<0\rbrace}ds.\label{ar}
%\end{align}
%
%
%Let $\lambda\rightarrow 0+$ in \eqref{lz3}, we can calculate the \textit{probability of bankruptcy} as
%\begin{align}
%\psi(x)&=P_{x}(\tau_{\omega}<\infty)=1-E_{x}[e^{-\mathcal{E}}],\label{st}
%\end{align}

\subsection{Risk model with surplus-dependent tax}

%Azema and Yor \citeyear{AY79} introduced a family of simple local martingales and proposed a solution to the Skorokhod embedding problem. These processes are later named Azema-Yor processes and the associated passage time is called the Azema-Yor stopping time. Their applications range from solving the Skorokhod embedding problem (Obloj \citeyear{O04}), pricing capped Russian options (Ott \citeyear{O13}) and portfolio optimization with drawdown constraints (El Karoui and Meziou \citeyear{EM06})
%
%%Consider a regular time-homogeneous diffusion $V$ in \eqref{G}, and denote $\overline{X}_t:=\max_{0\leq u\leq t}V_u$ its running maximum. The Azema-Yor stopping time is
%Consider the diffusion $V$ as defined in \eqref{G} with initial value $x>0$, and define the running maximum process associated with $V$ as
%\begin{align}
%\overline{X}_t &=(\max_{0\leq u \leq t} V_u)\vee s.\label{rm}
%\end{align}
%started at $s\geq x >0$. Define the Azema-Yor stopping time as
%\begin{align}
%\tau_{AY} &=\inf \{t>0:  X_t \leq g(\overline{X}_t)  \},
%\end{align}
%for any continuous function $g$ defined on $[0,\infty)$ satisfying $0<g(x)<x$ for $x>0$.

The risk model with tax was introduced by Albrecher and Hipp \citeyear{AH07} in the case of a constant tax rate, and was later extended by Albrecher, Renaud and Zhou \citeyear{ARZ08} and Kyprianou and Zhou \citeyear{KZ09} to the case where there is a non-negative state-dependent tax payment paid \textit{immediately} when the surplus value of the company is at a running maximum.
%For a pointer to the literature in this area, please refer to Albrecher, Renaud and Zhou \citeyear{ARZ08} and the references therein.

%We cast our model in a regular time-homogeneous diffusion setting (Li, Tang and Zhou \citeyear{LTZ13}).
Assume that the before-tax value of the company is modeled by the diffusion $X$ in \eqref{G}. Introduce a state-dependent tax: whenever the process $X_t$ coincides with its running maximum $\overline{X}_t$, the firm pays tax at rate $\gamma(\overline{X}_t)$, where $\gamma(\cdot):[x, \infty)\rightarrow [0,1)$ is a Borel measurable function. The value process after taxation is denoted as $(U_t)_{t\geq 0}$, and satisfies
\begin{align}
dU_t &=dX_t -\gamma(\overline{X}_t) d\overline{X}_t, \quad  U_0 =X_0 =x, \quad t\geq 0,\label{U}
\end{align}
%
%For a constant default threshold $a$ (conventionally assigned $0$ in the ruin theory), define the \textit{time of default with tax} as
%\begin{align}
%T^U (a) &=\inf\{t\geq 0 : U_t \leq a\},
%\end{align}
%and $\inf \emptyset =\infty$ by convention. Note that $a< x$. Now we want to compute $E[T^U (a)]$.

Kyprianou and Zhou \citeyear{KZ09} introduce the following function
\begin{align}
\overline{\gamma}(u)&=u-\int_{x}^u \gamma(z)dz=x +\int_{x}^u (1-\gamma(z))dz, \quad u> x .
\end{align}
Notice that $x< \overline{\gamma}(u)\leq u$. We have the following representation $U_t =X_t -\overline{X}_t +\overline{\gamma}(\overline{X}_t)$.

%Li, Tang and Zhou \citeyear{LTZ13} define the \textit{time of default with tax} as the two-sided exit time of $U$ from the constant boundary $[a,b]$,
%and compute explicitly the Laplace transforms. In Section \ref{s2}, we extend it to the occupation time of $U$ below a constant level exceeding a time threshold(\textit{grace period}).

%Define$T^U (a) =\inf\{t\geq 0 : U_t \leq a\},$
%$
%g(u)=u-\overline{\gamma}(u)=\int_{x}^u \gamma(z)dz.%\label{ge}
%$
%We have that $0\leq g(u)<u-x<u$ because $\gamma(\cdot):[x, \infty)\rightarrow [0,1)$.

\section{Omega risk model with surplus-dependent tax\label{s2}}
%Previously when we consider the Omega risk model where the surplus value of an insurance company is modeled by \eqref{G}. Now we consider a more realistic model that takes into account the taxation perspective of a company.

We combine the practical features of the ``Omega risk model" and the ``risk model with tax" to propose the ``Omega risk model with tax", where we use \eqref{U} to model the after-tax surplus value of an insurance company.  Li, Tang and Zhou \citeyear{LTZ13} define the \textit{time of default with tax} as the two-sided exit time of $U$ from the constant boundaries. We define the \textit{time of bankruptcy} of the company as the first time the occupation time exceeds an independent exponential time with unit rate.

%the two-sided exit problem of the surplus value process as the time of bankruptcy(Li, Tang and Zhou \citeyear{LTZ13}), we consider the occupation time of the process $U_t$ below a constant level until a first hitting time or an independent exponential time.

$U_t$ is a special case of the so called Azema-Yor process introduced in Azema and Yor \citeyear{AY79}, which is a process refracted by functionals of its running maximum (see also the terminology in Albrecher and Ivanovs \citeyear{AI14}). We first obtain general Laplace transforms of the occupation time of an Azema-Yor process below a constant level until the first hitting time of another Azema-Yor process or until an independent exponential time. To the best of our knowledge, these results are new and are of independent interest.
 As an application, we obtain the explicit Laplace transform of the ``time of bankruptcy" of the ``Omega risk model with tax". Our general formula contains some results in Li and Zhou \citeyear{LZ13} and Zhang \citeyear{Z13} as special cases.
 \begin{re}
 Note that $U$ defined in \eqref{U} is a special case of the general Azema-Yor process introduced below, so our strategy is to first study the occupation time of a general Azema-Yor process, and then specialize to the after-tax process in \eqref{U}. In the following, we use the same notation $U_t$ to denote a general Azema-Yor process and we shall mention explicitly whenever we refer to the process in \eqref{U}.
\end{re}

Consider the following two general Azema-Yor processes:
\begin{align}
V_t&:=X_t-h(\overline{X}_t); \quad \quad U_t:=X_t-g(\overline{X}_t), \quad V_0=U_0=x,\label{ay}
\end{align}
where $h$ and $g$ are defined on $[x,\infty)$ satisfying $0\leq h(u)\leq u-x, 0\leq g(u)\leq u-x$ and $h(x)=g(x)=0$. Note that $V_t$ and $U_t$ are both constructed using $(X_t,\overline{X}_t)$, but they may have possibly different $h(\cdot)$ and $g(\cdot)$. If $h(\cdot)=g(\cdot)$, then $V_t=U_t$, $P$-a.s., $t>0$.
%
%\subsection{Occupation time of $U$ until the exit time of $V$}
%

%Note that there are two Azema-Yor functionals under consideration here and they are both built from $(X_t, \overline{X}_t)$. One is $U_t= X_t-g(\overline{X}_t)$. The other is $V_t=X_t-h(\overline{X}_t)$.

In the following,  fix two constants $y$ and $a$ such that $-x\leq y<a$. Define $y^{\prime}=y+x$ and $a^{\prime}=a+x$, which satisfy $0\leq y^{\prime}<a^{\prime}$, and are useful later when we compare our results to those of Zhang \citeyear{Z13}.
Define the first hitting time of $V$ to $-a$ as
\begin{align}
\tau_{h, a}:=\inf\left\lbrace t> 0: V_t\leq -a     \right\rbrace=\inf\left\lbrace t> 0: h(\overline{X}_t)-X_t\geq a     \right\rbrace.\label{tauh}
\end{align}
%Note that is just the Azema-Yor stopping time with a function $h(\cdot)$.

We introduce some notations that are consistent with Zhang \citeyear{Z13} which will be used later in the proof.
Define $\tau_m^{\pm}:=\inf\lbrace t>0: X_t \gtreqqless m \rbrace, m\in J$, and define $\phi^{+}_q(\cdot)$ and $\phi^{-}_q(\cdot)$ respectively as the increasing and decreasing positive solutions of the \textit{Sturm-Liouville} ordinary differential equation $\frac{1}{2}\sigma^2(x) f^{\prime\prime}(x)+\mu(x)f^{\prime}(x)=qf(x)$. If we fix the scale function of $X$ as $s(\cdot)$, then there exists a positive constant $w_q$ such that $w_q s^{\prime}(x)=(\phi^{+}_q)^{\prime}(x) \phi^{-}_q(x)-(\phi^{+}_q)^{\prime}(x)\phi^{+}_q(x)$. Define the auxiliary functions $W_{q}(x,y):=\frac{1}{w_q}(\phi^{+}_q(x) \phi^{-}_q(y)-\phi^{+}_q(y)\phi^{-}_q(x))$, $W_{q,1}(x,y):=\frac{\partial}{\partial x}W_q(x,y)$ and $W_{q,2}(x,y):=\frac{\partial}{\partial y}W_{q,1}(x,y)$.

The main object of interest is the occupation time of $U_t$ below $-y$ until $\tau_{h, a}$:
\begin{align}
G_y^{a,h, g}&:=\int_0^{\tau_{h, a}} \1_{\lbrace U_t<-y \rbrace}dt.\label{ob}
\end{align}
%By choosing different combinations of $(h(\cdot), g(\cdot))$, we can show that some results in the recent literature are just special cases of the general result in the Theorem.

%Before we study the Laplace transform of $G_y^{a,h, g}$,

The following is a slight generalization of Proposition 1 of Zhang and Hadjiliadis \citeyear{ZH12}, which studies the path decomposition of $U_t$ for $t\in [0,\tau_{h, a}]$.  Define the first drawdown time of $X$ as $\sigma_a:=\inf\lbrace t>0: \overline{X}_t-X_t\geq a \rbrace$. If $g(u)=0, h(u)=u-x$, then  $U_t=X_t$ and $\tau_{u, a}=\sigma_{a^{\prime}}$, $P$-a.s., and the following result reduces to Proposition 1 of  Zhang and Hadjiliadis \citeyear{ZH12} by substituting $a^{\prime}\rightarrow K$ there.

 %Note that they consider the path decomposition on $[0, \tau^D_a]$, where $\tau^D_a$ is the first drawdown time of the $X$ process of $a$ units.

\begin{proposition}(Path decomposition of $U$ until an Azema-Yor stopping time, generalization of Proposition 1 of Zhang and Hadjiliadis \citeyear{ZH12})\label{de}

With $\tau_{h, a}$ defined in \eqref{tauh}, consider the last passage time of $X$ to its running maximum before $\tau_{h, a}$:
\begin{align}
\rho:=\sup\left\lbrace t\in[0, \tau_{h, a}]: X_t =\overline{X}_t \right\rbrace.
\end{align}
Conditional on $X_{\rho}$, the path fragments $\lbrace U_t\rbrace_{t\in[0, \rho]}$ and $\lbrace U_t\rbrace_{t\in[\rho, \tau_{h, a}]}$ are two independent processes.

Denote $Y_t^{\rho}:=P_x(\rho>t\mid \mathcal{F}_t)$.
Then $Y_t^{\rho}$ is a supermartingale and has the Doob-Meyer decomposition
\begin{align}
Y_t^{\rho}&=M_t^{\rho}-L_t^{\rho},\label{dm}
\end{align}
where
\begin{align}
Y_t^{\rho}&=P_x(\rho>t \mid \mathcal{F}_t)=\frac{s(X_t)-s(h(\overline{X_t})-a)}{s(\overline{X}_t)-s(h(\overline{X_t})-a)} \1_{\lbrace t<\tau_{h, a}\rbrace},\label{dmy}
\end{align}
\begin{align}
M_t^{\rho}&=1+\int_0^{t\wedge \tau_{h, a}  } \frac{s^{\prime}(X_u)\sigma(X_u)}{s(\overline{X}_u)-s(h(\overline{X}_u)-a)}dW_u,\label{dmm}
\end{align}
and
\begin{align}
L_t^{\rho}&=\int_0^{t\wedge \tau_{h, a} } \frac{s^{\prime}(\overline{X}_u)}{s(\overline{X}_u)-s(h(\overline{X}_u)-a)} d\overline{X}_u.\label{dml}
\end{align}
\end{proposition}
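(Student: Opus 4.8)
The plan is to establish the three assertions in turn, starting from the explicit form \eqref{dmy}. On the event $\{t<\tau_{h,a}\}$ the running maximum at time $t$ is $\overline{X}_t$, and $\{\rho\le t\}$ holds exactly when, from time $t$ onward, $X$ never returns to $\overline{X}_t$ before $\tau_{h,a}$; on that event the maximum stays frozen at $\overline{X}_t$ throughout $[t,\tau_{h,a}]$, so the defining condition $h(\overline{X}_s)-X_s\ge a$ reduces to $X$ reaching the fixed level $h(\overline{X}_t)-a$. Consequently $\{\rho>t\}\cap\{t<\tau_{h,a}\}$ is precisely the event that $X$ restarted at $X_t$ hits $\overline{X}_t$ before $h(\overline{X}_t)-a$, i.e. $\{\tau_{\overline{X}_t}^{+}<\tau_{h(\overline{X}_t)-a}^{-}\}$. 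Applying the strong Markov property together with the classical scale-function exit identity $P_x(\tau_b^{+}<\tau_c^{-})=\frac{s(x)-s(c)}{s(b)-s(c)}$ for $c<x<b$, with $b=\overline{X}_t$ and $c=h(\overline{X}_t)-a$, yields \eqref{dmy}. Note that $c<X_t\le b$ on $\{t<\tau_{h,a}\}$, since there $V_t>-a$ forces $X_t>h(\overline{X}_t)-a$.

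For the Doob--Meyer decomposition I would apply It\^o's formula to $F(X_t,\overline{X}_t)$ with $F(x,m)=\frac{s(x)-s(h(m)-a)}{s(m)-s(h(m)-a)}$, using that $\overline{X}$ is continuous, nondecreasing, and increases only on the diagonal $\{X_t=\overline{X}_t\}$. The decisive simplification is that the scale function is harmonic for the generator, $\frac{1}{2}\sigma^2 s''+\mu s'=0$, so the entire $dt$-drift produced by the $X$-dynamics cancels and the stochastic part collapses to exactly the integrand appearing in \eqref{dmm}; this identifies the local martingale $M^{\rho}$, with $M_0^{\rho}=1$ because $X_0=\overline{X}_0=x$ and $h(x)=0$ give $F(x,x)=1$. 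The surviving term $\partial_m F\,d\overline{X}_t$ need only be evaluated on the diagonal $x=m$, where $d\overline{X}$ is carried, and a direct differentiation gives $\partial_m F|_{x=m}=-s'(m)/\big(s(m)-s(h(m)-a)\big)$, reproducing $-dL_t^{\rho}$ in \eqref{dml}. Since $s$ is strictly increasing and $h(m)-a<m$ on $[x,\infty)$ (because $a>-x$ forces $h(m)-a\le (m-x)-a<m$), the integrand of $L^{\rho}$ is nonnegative, so $L^{\rho}$ is continuous, adapted and nondecreasing; as $Y^{\rho}$ is bounded in $[0,1]$ this upgrades $M^{\rho}$ to a genuine martingale and delivers \eqref{dm} together with the supermartingale property of $Y^{\rho}$.

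For the conditional independence I would follow the last-exit decomposition in the proof of Proposition~1 of Zhang and Hadjiliadis \citeyear{ZH12}, adapted here to the Azema-Yor stopping boundary $\tau_{h,a}$ and the general refraction $g$. The structural observation is that past $\rho$ the maximum is frozen at $X_\rho$, so $\{U_t\}_{t\in[\rho,\tau_{h,a}]}$ is the trajectory of $X-g(X_\rho)$ run from $X_\rho$ down to $h(X_\rho)-a$ while conditioned to stay strictly below $X_\rho$ --- the diffusion killed at $h(X_\rho)-a$ and conditioned, by a Doob transform, to avoid its starting level $X_\rho$ --- whereas $\{U_t\}_{t\in[0,\rho]}$ is the pre-$\rho$ fragment stopped at the last visit of $X$ to its running maximum. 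Conditioning on $X_\rho$ and invoking the last-exit splitting for the strong Markov process $(X,\overline{X})$ renders these two fragments independent. The main obstacle is the rigorous justification of this split, since $\rho$ is a co-optional last-exit time rather than a stopping time; I expect to handle it by the honest-time/exit-system machinery underlying the argument of Zhang and Hadjiliadis \citeyear{ZH12}, approximating $\rho$ from the right by embedded hitting times of the running maximum and passing to the limit, which is where the genuine work lies.
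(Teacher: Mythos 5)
Your proposal is correct and follows essentially the same route as the paper's proof: the scale-function exit identity behind \eqref{dmy}, It\^o's formula applied to $F(X_t,\overline{X}_t)$ with the drift killed by harmonicity of $s$ and with $d\overline{X}$ carried on the diagonal (where the $h'$ terms cancel, exactly as in the paper's computation), and the last-exit decomposition of Zhang and Hadjiliadis transferred from $X$ to $U$ via the key observation that $\overline{X}$ is frozen at $X_\rho$ on $[\rho,\tau_{h,a}]$ while the pre-$\rho$ fragment is $\mathcal{F}_\rho$-measurable. The rigorous last-exit splitting you flag as the remaining work is resolved in the paper just as you anticipate, by invoking the analogues of Propositions 2 and 4 of Zhang and Hadjiliadis \citeyear{ZH12} (including the conditioned SDEs for the two path fragments), so no genuinely new argument is needed there.
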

 \begin{proof}
The proof is similar to that of Proposition 1, p.744 of Zhang and Hadjiliadis \citeyear{ZH12}, but some steps need non-trivial adjustments. Thus we present the proof here for completeness.

Note that $\lbrace \rho>t \rbrace$ means that, $\lbrace t<\tau_{h, a}\rbrace$ holds and the path of $X$ will revisit $\overline{X}_t$ before it reaches $h(\overline{X}_t)-a$. So we have
\begin{align}
Y_t^{\rho}&=P_x(\rho>t \mid \mathcal{F}_t)=\frac{s(X_t)-s(h(\overline{X_t})-a)}{s(\overline{X}_t)-s(h(\overline{X_t})-a)} \1_{\lbrace t<\tau_{h, a}\rbrace},
\end{align}
 %where $s(\cdot)$ is the scale function of $X$.
 For any $t\in[0, \tau_{h, a})$, apply Ito's lemma
 \begin{align}
 dY_t^{\rho}&=\frac{d[s(X_t)-s(h(\overline{X_t})-a)]}{s(\overline{X}_t)-s(h(\overline{X_t})-a)}-\frac{s(X_t)-s(h(\overline{X_t})-a)}{[s(\overline{X}_t)-s(h(\overline{X_t})-a)]^2}d[s(\overline{X}_t)-s(h(\overline{X_t})-a)].\label{310}
 \end{align}
 We have the following intermediate calculations: $d[s(X_t)-s(h(\overline{X_t})-a)]=s^{\prime}(X_t)\sigma(X_t)dW_t-s^{\prime}(h(\overline{X_t})-a)h^{\prime}(\overline{X}_t))d\overline{X}_t$ and $d[s(\overline{X}_t)-s(h(\overline{X_t})-a)]=[s^{\prime}(\overline{X}_t)-s^{\prime}(h(\overline{X_t})-a)h^{\prime}(\overline{X}_t))]d\overline{X}_t$. Note that the measure $d\overline{X}_t$ is supported on $\lbrace t\mid X_t=\overline{X}_t\rbrace$. The above two expressions combined with \eqref{310} lead to
 \begin{align}
 dY_t^{\rho}&=\frac{s^{\prime}(X_t)\sigma(X_t)}{s(\overline{X}_t)-s(h(\overline{X_t})-a)}dW_t-\frac{s^{\prime}(\overline{X}_t)}{s(\overline{X}_t)-s(h(\overline{X_t})-a)}d\overline{X}_t.\label{311}
 \end{align}
 Integrate \eqref{311} from $0$ to $t\in[0,\tau_{h,a})$ and note that $Y_0^{\rho}=1$ and $\lim\limits_{t\uparrow \tau_{h,a}}Y_t^{\rho}=0$, then \eqref{dmm} and \eqref{dml} follow.

 We can derive similar results as Proposition 2 and Proposition 4 of Zhang and Hadjiliadis \citeyear{ZH12}. In particular, conditionally on $X_{\rho}=m$, $\lbrace X_t\rbrace_{t\in[0,\rho]}$ has the same law as the unique weak solution of the following SDE stopped at the first hitting time of level $m$
 \begin{align}
 dZ_t &= \(\mu(Z_t) +\frac{s^{\prime}(Z_t)\sigma^2(Z_t)}{s(Z_t)-s(h(\overline{Z}_t)-a)}   \)dt+\sigma(Z_t) dB_t, \quad Z_0=x.
 \end{align}

 Conditionally on $X_{\rho}=m$, $\lbrace m-X_t\rbrace_{t\in[\rho, \tau_{h,a}]}$ has the same law as the unique weak solution of the following SDE stopped at the first hitting time of level $a$
 \begin{align}
 dJ_t &= \(-\mu(m-J_t) +\frac{s^{\prime}(m-J_t)\sigma^2(m-J_t)}{s(m)-s(h(m)-J_t)}   \)dt-\sigma(m-J_t) dB_t, \quad J_0=0.
 \end{align}
 Similar to the proof of Proposition 4 of Zhang and Hadjiliadis \citeyear{ZH12}, we have that $\lbrace X_t\rbrace_{t\in [\rho, \tau_{h, a}]}$ and $\lbrace X_t\rbrace_{t\in[0, \rho]}$, or equivalently $\mathcal{F}_{\rho}$, are conditionally independent. Conditionally on $\overline{X}_{\rho}=m$,  for $t\in[\rho, \tau_{h, a}]$, we have $\overline{X}_t =\overline{X}_{\rho}=m$, and  $U_t=X_t-g(\overline{X}_t)=X_t-g(m)$, $P$-a.s. Then we have that $\lbrace U_t\rbrace_{t\in [\rho, \tau_{h, a}]}$ and $\mathcal{F}_{\rho}$ are conditionally independent. For $t\in[0,\rho]$, $U_t=X_t-g(\overline{X}_t)$ is adapted to $\mathcal{F}_{\rho}$, thus $\lbrace U_t\rbrace_{t\in[0, \rho]}$ and $\lbrace U_t\rbrace_{t\in[\rho, \tau_{h, a}]}$ are two conditionally independent  processes. This completes the proof. \qed
 \end{proof}

 % Since  $\lbrace X_t\rbrace_{t\in [\rho, \tau_{h, a}]}$ and $\lbrace X_t\rbrace_{t\in[0, \rho]}$(equivalently $\mathcal{F}_{\rho}$) are conditionally independent(Proposition 4 of Zhang and Hadjiliadis \citeyear{ZH12}),

%We also recall the following useful result from Zhang \citeyear{Z13} using our notation.
%\begin{lemma}(Proposition 4.1 of Zhang \citeyear{Z13})
%
%\end{lemma}

Now we present the main result of this section: the Laplace transform of $G_y^{a,h,g}$.
\begin{theorem}(Occupation time until first hitting for two Azema-Yor processes, generalization of Theorem 4.5 of Zhang \citeyear{Z13})\label{mt}

For $q\geq 0$, $-x\leq y<a$, if $g(u)\geq h(u)$ for $u\in[x,\infty)$, then we have
\begin{align}
E_x[ e^{-q G_y^{a,h, g}}; \tau_{h, a}<\infty ]&= \int_x^{\infty} \frac{ \frac{s^{\prime}(m)}{W_q(g(m)-y, h(m)-a)}    }{1+\frac{s(m)-s(g(m)-y)}{s^{\prime}(g(m)-y)}\frac{W_{q,1}(g(m)-y, h(m)-a)}{W_q(g(m)-y, h(m)-a)}}\notag\\
&\quad \times \exp\left(-\int_x^m \frac{ \frac{s^{\prime}(u)}{s^{\prime}(g(u)-y)} \frac{W_{q,1}(g(u)-y, h(u)-a  )}{W_q(g(u)-y, h(u)-a  )}      }{1+ \frac{s(u)-s(g(u)-y)}{s^{\prime}(g(u)-y)}  \frac{W_{q,1}(g(u)-y, h(u)-a  )}{W_q(g(u)-y, h(u)-a  )}   } du   \right)    dm.\label{main}
\end{align}
\end{theorem}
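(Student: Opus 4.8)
The plan is to use the running maximum $\overline{X}$ as an intrinsic clock and to build $E_x[e^{-qG_y^{a,h,g}};\tau_{h,a}<\infty]$ from the two path fragments isolated in Proposition \ref{de}. Conceptually, Proposition \ref{de} tells us that, conditional on the terminal maximum $\overline{X}_\rho=m$, the ascent fragment $\{U_t\}_{t\in[0,\rho]}$ (during which $\overline{X}$ climbs from $x$ to $m$ through excursions that return to the running maximum) and the terminal descent $\{U_t\}_{t\in[\rho,\tau_{h,a}]}$ (during which $\overline{X}\equiv m$, so $U_t=X_t-g(m)$, and $X$ descends until $V_t=X_t-h(m)$ first reaches $-a$) are independent. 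Since $G_y^{a,h,g}=\int_0^\rho\1_{\{U_t<-y\}}dt+\int_\rho^{\tau_{h,a}}\1_{\{U_t<-y\}}dt$ splits additively along this decomposition, the Laplace transform factorizes into a product of the two conditional transforms, integrated against the law of $\overline{X}_\rho$. I would organize this factorization and integration as a single first-order renewal relation in the maximum level, which automatically packages the ascent transform together with the law of $\overline{X}_\rho$ into one exponential.

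Concretely, define $\Theta(u):=E[e^{-qG};\tau_{h,a}<\infty]$ for the process started at its running maximum $u$, so that the desired quantity is $\Theta(x)$. Fixing a small increment $\Delta>0$, I would analyze the first exit of $X$ from the interval $(h(u)-a,\,u+\Delta)$, weighted by $e^{-q\int\1_{\{X_t<g(u)-y\}}dt}$: either $X$ reaches the new maximum $u+\Delta$ (and the evolution restarts from $\Theta(u+\Delta)$ by the strong Markov property), or it first reaches $h(u)-a$, which is exactly $\tau_{h,a}$ and coincides with the terminal descent. Writing $f_\Delta(u)$ and $g_\Delta(u)$ for the two weighted exit quantities gives
\begin{equation*}
\Theta(u)=f_\Delta(u)\,\Theta(u+\Delta)+g_\Delta(u)+o(\Delta).
\end{equation*}
Both $f_\Delta$ and $g_\Delta$ are two-sided-exit occupation Laplace transforms, obtained by solving the piecewise Sturm--Liouville problem $\frac{1}{2}\sigma^2 f''+\mu f'=q f$ on $[h(u)-a,\,g(u)-y]$ and $\frac{1}{2}\sigma^2 f''+\mu f'=0$ on $[g(u)-y,\,u+\Delta]$, with matching of $f$ and $f'$ at $g(u)-y$ and the appropriate boundary values. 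Here the hypotheses $g\ge h$ and $y<a$ are exactly what guarantee $h(u)-a<g(u)-y\le u$, so the killed-at-rate-$q$ region $\{X<g(u)-y\}=\{U<-y\}$ sits strictly above the absorbing boundary $\{V=-a\}=\{X=h(u)-a\}$; this is where $W_q$ and $W_{q,1}$ enter, since the solution vanishing at $h(u)-a$ is proportional to $W_q(\,\cdot\,,h(u)-a)$.

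Solving this two-point problem and expanding to first order in $\Delta$ yields the two rates
\begin{equation*}
\kappa(u):=\lim_{\Delta\downarrow 0}\frac{1-f_\Delta(u)}{\Delta},\qquad \eta(u):=\lim_{\Delta\downarrow 0}\frac{g_\Delta(u)}{\Delta},
\end{equation*}
and a direct computation using the Wronskian relation defining $w_q$ identifies $\kappa(u)$ with the function in the exponent of \eqref{main} and $\eta(u)$ with the prefactor multiplying the exponential there. The renewal relation then collapses to the linear ODE $\Theta'(u)=\kappa(u)\Theta(u)-\eta(u)$; multiplying by the integrating factor $\exp(-\int_x^u\kappa)$ and integrating up to $+\infty$ (where the surviving-maximum factor forces the boundary term to vanish) gives $\Theta(x)=\int_x^\infty\eta(m)\exp(-\int_x^m\kappa(u)\,du)\,dm$, which is exactly \eqref{main}. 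As a sanity check, at $q=0$ one has $W_0(z,w)\propto s(z)-s(w)$ and $W_{0,1}(z,w)\propto s'(z)$, whence $\kappa(m)\to s'(m)/(s(m)-s(h(m)-a))$ and the right-hand side reduces to $P_x(\tau_{h,a}<\infty)$, as it must.

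I expect the main obstacle to be the rigorous justification of the renewal relation and its limit, not the algebra. Two points need care: controlling the $o(\Delta)$ error coming from the variation of $g$ and $h$ (hence of the boundary $h(u)-a$ and the threshold $g(u)-y$) across the window $[u,u+\Delta]$, together with the finitely many returns to the maximum inside the window; and pinning down the behaviour of $\Theta$ as $u\to\infty$ so that the integrating-factor term genuinely vanishes (transience of the maximum forcing the contribution of $\{\tau_{h,a}<\infty\}$ to disappear). If one prefers to stay closer to Proposition \ref{de}, the same two rates can instead be read off from the Doob--Meyer compensator $L^\rho$ (which supplies the law of $\overline{X}_\rho$, i.e.\ $\eta$ at $q=0$) and from the conditioned descent SDE of Proposition \ref{de} (which supplies the terminal-descent transform); the renewal route is simply the most economical way to fuse the ascent transform with the law of $\overline{X}_\rho$ into the single exponential factor.
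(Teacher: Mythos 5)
Your proposal is correct, and its computational engine coincides with the paper's: freeze the running maximum over a small window, solve the piecewise Sturm--Liouville two-point problem on $[h(u)-a,\,g(u)-y]\cup[g(u)-y,\,u+\Delta]$ (this is exactly Zhang's Proposition 4.1, valid here because $g\geq h$, $y<a$ and $g(u)\leq u-x$ give the ordering $h(u)-a<g(u)-y\leq u$), and pass to an exponential limit. What differs is the scaffolding. The paper decomposes the path at the last visit $\rho$ to the maximum (Proposition \ref{de}), invokes Protter's projection theorem $E_x[f(X_\rho)I_\rho]=E_x[\int f(X_t)I_t\,dL_t]$ with the Doob--Meyer compensator from \eqref{dml}, imports Lehoczky's law of $X_\rho$ in \eqref{26}, computes the ascent and descent \emph{conditional} transforms separately (the discretized product \eqref{key1} and the $\delta\downarrow 0$ limit \eqref{key2}), and finally integrates their product against \eqref{26}; you instead fuse all three ingredients into a single first-order renewal relation for the value function $\Theta(u)$ of the current maximum and solve by an integrating factor, with the vanishing boundary term at $u=\infty$ standing in for the explicit law of $X_\rho$. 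A pleasant by-product of your route is that the cancellation the paper performs by hand --- the Lehoczky rate $s^{\prime}(u)/(s(u)-s(h(u)-a))$ appearing with opposite signs in \eqref{key1} and \eqref{26} --- never arises, since you work with unconditional weighted exit quantities; you also recover Lehoczky's formula as the $q=0$ case rather than citing it. The price is the two analytic points you rightly flag, and both close: the $o(\Delta)$ control as $h$, $g$ and the moving boundary $h(\overline{X}_t)-a$ vary across the window requires the same continuity that the paper's own product limit uses implicitly (so you are no worse off), and the boundary term vanishes because $e^{-qA}\leq 1$ gives $f_\Delta^{(q)}\leq f_\Delta^{(0)}$, hence $\kappa_q\geq\kappa_0$ and $\Theta_q\leq\Theta_0$, while at $q=0$ one has $\Theta_0(u)e^{-\int_x^u\kappa_0}=e^{-\int_x^u\kappa_0}-e^{-\int_x^\infty\kappa_0}\to 0$ whether $\int^\infty\kappa_0$ converges or not (one should also either verify differentiability of $\Theta$ or telescope the renewal relation directly, which is in effect what the paper's product over $N$ windows does). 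In exchange for being longer, the paper's route yields reusable intermediate structure --- the explicit law of $X_\rho$ and the conditioned ascent/descent SDEs of Proposition \ref{de} --- whereas yours is the more economical and self-contained derivation of \eqref{main}.
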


\begin{proof}
The proof is similar to that of Theorem 4.5 of Zhang \citeyear{Z13}, but needs some non-trivial adaptations where needed. We present the proof for completeness.
Introduce a non-negative bounded optional process
\begin{align*}
I_t &=\exp\left(-q\int_0^t \1_{\lbrace U_s<-y\rbrace}ds\right)\1_{\lbrace t<\tau_{h, a}<\infty\rbrace}, \quad t\geq 0.
\end{align*}

From Theorem 15, p.380 of Protter \citeyear{P04} combined with the decomposition in Proposition \ref{de}, we have that for any positive test function $f(\cdot)$ on $[0,\infty)$
\begin{align*}
E_x[f(X_{\rho}) I_{\rho} ]&=E_x \left[ \int_0^{\infty} f(X_t)I_t dL_t \right]=E_x \left[ \int_0^{\infty} \frac{f(X_t)I_t s^{\prime}(\overline{X}_t)}{s(\overline{X}_t)-s(h(\overline{X_t})-a)} d\overline{X}_t \right].
\end{align*}

Apply a change of variable $m=\overline{X}_t$, and recall that the measure $d\overline{X}_t$ is supported on $\lbrace t\mid X_t=\overline{X}_t\rbrace$, then
\begin{align}
E_x[f(X_{\rho}) I_{\rho} ]&=\int_x^{\infty} f(m) E_x \left[ \exp\left(-q\int_0^{\tau_m^{+}} \1_{\lbrace U_s<-y\rbrace}ds\right)\1_{\lbrace \tau_m^{+}< \tau_{h, a}\rbrace} \right] \frac{s^{\prime}(m)}{s(m)-s(h(m)-a)}dm. \label{24}
\end{align}

From equation (20) on p.607 in Lehoczky \citeyear{L77} with the substitution $v-h(v)+a\rightarrow u(v)$, we have
\begin{align}
P_x(\tau_m^{+}< \tau_{h, a})&= \exp \left( -\int_x^m \frac{s^{\prime}(v)}{s(v)-s(h(v)-a)}dv \right),\label{25}
\end{align}
\begin{align}
P_x (X_{\rho} \in dm)%&=P_x (\overline{X}_{\rho} \in dm)=-dP_x(\tau_m^{+}< \tau_{h, a})\notag\\
&=\frac{s^{\prime}(m)}{s(m)-s(h(m)-a)} \exp\left(-\int_x^m \frac{s^{\prime}(v)}{s(v)-s(h(v)-a)}dv\right).\label{26}
\end{align}

From \eqref{24} and \eqref{25}, we have
\begin{align}
E_x[ f(X_{\rho})I_{\rho})  ]&=\int_x^{\infty} f(m) E_x\left[  \exp\left(-q\int_0^{\tau_m^{+}} \1_{\lbrace U_t<-y\rbrace}dt\right)\mid \tau_m^{+}<\tau_{h, a}\right]\notag\\
&\quad \times \frac{s^{\prime}(m)}{s(m)-s(h(m)-a)} \exp\left(-\int_x^m \frac{s^{\prime}(v)}{s(v)-s(h(v)-a)}dv\right)dm\notag\\
&=\int_x^{\infty} f(m) E_x\left[  \exp\left(-q\int_0^{\tau_m^{+}} \1_{\lbrace U_t<-y\rbrace}dt\right)\mid \tau_m^{+}<\tau_{h, a}\right] P_x (X_{\rho} \in dm).\label{h1}
\end{align}
On the other hand
\begin{align}
E_x[ f(X_{\rho})I_{\rho})  ]
&=\int_x^{\infty} f(m) E_x\left[  \exp\left(-q\int_0^{\rho}  \1_{\lbrace U_t<-y\rbrace}dt  \right)\mid X_{\rho}=m \right] P_x (X_{\rho} \in dm).\label{h2}
\end{align}
From \eqref{h1} and \eqref{h2} and the arbitrariness of $f(\cdot)$, we have
\begin{align*}
E_x \left[ \exp\left(-q\int_0^{\rho}  \1_{\lbrace U_t<-y\rbrace}dt  \right)\mid X_{\rho}=m \right]&=E_x\left[  \exp\left(-q\int_0^{\tau_m^{+}} \1_{\lbrace U_t<-y\rbrace}dt\right)\mid \tau_m^{+}<\tau_{h, a}\right]
\end{align*}

 %The following is  similar to the corresponding part of the proof of Theorem 4.5 in Zhang \citeyear{Z13} except that we consider the event $\1_{\lbrace U_t<-y \rbrace}=\1_{\lbrace X_t<g(\overline{X}_t)-y \rbrace}$.
Define $A_y^{a,b}:=\int_0^{\tau_a^{-} \wedge \tau_b^{+}} \1_{\lbrace X_t<y\rbrace}dt$, and also $\varepsilon=(m-x)/N$ for a large integer $N>0$. For $i=0,1,...,N-1$, when $X$ starts at $x+i\varepsilon$, the condition $\lbrace\tau_m^{+}<\tau_{h, a}\rbrace$ requires that at each time, the process shall hit the level $x+(i+1)\varepsilon$ before its hits $h(x+i\varepsilon)-a$.
 From the Lebesgue dominated convergence theorem, continuity and the strong Markov property of $X$, we have
\begin{align}
&E_x\left[  \exp\left(-q\int_0^{\tau_m^{+}} \1_{\lbrace U_t<-y\rbrace}dt\right)\mid \tau_m^{+}<\tau_{h, a}\right]\notag\\
&=\lim\limits_{N\rightarrow \infty} \prod\limits_{i=0}^{N-1} E_{x+i\varepsilon} \left[e^{-q A^{h(x+i\varepsilon)-a, x+(i+1)\varepsilon}_{g(x+i \varepsilon) -y}} \mid \tau^{+}_{x+(i+1)\varepsilon} <\tau^{+}_{h(x+i\varepsilon)-a}   \right]\notag\\
&=\lim\limits_{N\rightarrow \infty} \exp\left( \log \left( \sum_{i=0}^{N-1} E_{x+i\varepsilon} \left[e^{-q A^{h(x+i\varepsilon)-a, x+(i+1)\varepsilon}_{g(x+i \varepsilon) -y}} \mid \tau^{+}_{x+(i+1)\varepsilon} <\tau^{+}_{h(x+i\varepsilon)-a}   \right]          \right)   \right)\notag\\
&=\exp\left( \lim\limits_{N\rightarrow \infty} \sum_{i=0}^{N-1} \left(E_{x+i\varepsilon} \left[e^{-q A^{h(x+i\varepsilon)-a, x+(i+1)\varepsilon}_{g(x+i \varepsilon) -y}} \mid \tau^{+}_{x+(i+1)\varepsilon} <\tau^{+}_{h(x+i\varepsilon)-a}   \right]-1\right) \right)\notag\\
&=\exp\left(\int_x^m \left[ \frac{s^{\prime}(u)}{s(u)-s(h(u)-a)}-\frac{ \frac{s^{\prime}(u)}{s^{\prime}(g(u)-y)} \frac{W_{q,1}(g(u)-y, h(u)-a  )}{W_q(g(u)-y, h(u)-a  )}      }{1+ \frac{s(u)-s(g(u)-y)}{s^{\prime}(g(u)-y)}  \frac{W_{q,1}(g(u)-y, h(u)-a  )}{W_q(g(u)-y, h(u)-a  )}   }      \right]du    \right),\label{key1}
\end{align}
and the last equality follows from the second expression of equation (18) in Proposition 4.1 of Zhang\citeyear{Z13}. This is because $h(x+i\varepsilon )-a < g(x+i \varepsilon) -y \leq x+i\varepsilon-x-y\leq  x+i\varepsilon <x+(i+1)\varepsilon$ for all $-x\leq y<a$. Note that both expressions in equation (18) of Zhang \citeyear{Z13} agree at the boundary case, thus we can include the case when $y=-x$ and still apply the second expression of equation (18) to proceed.

For the occupation time on $[\rho, \tau_{h,a}]$, we have
\begin{align}
&E_x\left[\exp\left(-q\int_{\rho}^{\tau_{h,a}} \1_{\lbrace U_t<-y\rbrace} dt  \right) \mid X_{\rho} =m\right]\notag\\
&=E_m \left[\exp\left(-q\int_{0}^{\tau^{-}_{h(m)-a}} \1_{\lbrace X_t<g(m)-y\rbrace} dt  \right) \mid \tau^{-}_{h(m)-a} <\tau_{m}^{+}\right]\notag\\
&=\lim\limits_{\delta\rightarrow 0+} \frac{E_m \left[e^{-q A_{g(m)-y}^{h(m)-a, m+\delta}}; \tau^{-}_{h(m)-a} <\tau_{m+\delta}^{+}\right]}{P_m(\tau^{-}_{h(m)-a} <\tau_{m+\delta}^{+})}\notag\\
&=\frac{\frac{s(m)-s(h(m)-a)}{W_q(g(m)-y, h(m)-a  )}   }{1+\frac{s(m)-s(g(m)-y)}{s^{\prime}(g(m)-y)}\frac{W_{q,1}(g(m)-y, h(m)-a)}{W_q(g(m)-y, h(m)-a)}},\label{key2}
\end{align}
and the last equality follows from the second part of (19) of Zhang \citeyear{Z13}, because $h(m)-a<g(m)-y\leq m-x-y\leq m<m+\delta$.
From Proposition \ref{de}, $\lbrace U_t\rbrace_{t\in[0, \rho]}$ and $\lbrace U_t\rbrace_{t\in[\rho, \tau_{h, a}]}$ are two conditionally independent processes, thus we have \begin{align}
E_x\left[\exp\left(-q\int_{0}^{\tau_{h,a}} \1_{\lbrace U_t<-y\rbrace} dt  \right) \mid X_{\rho} =m\right]&=E_x\left[\exp\left(-q\int_{0}^{\rho} \1_{\lbrace U_t<-y\rbrace} dt  \right) \mid X_{\rho} =m\right]\notag\\
& \times
 E_x\left[\exp\left(-q\int_{\rho}^{\tau_{h,a}} \1_{\lbrace U_t<-y\rbrace} dt  \right) \mid X_{\rho} =m\right].\label{int}
 \end{align}
We use the density in \eqref{26} to integrate out \eqref{int} and this completes the proof. \qed
\end{proof}

\begin{re}
The assumption $g(u)\geq h(u)$ for $u\in[x,\infty)$ is not overly restrictive because it contains many interesting cases for applications. In particular, it does not restrict the form of $g(\cdot)$ if $h(\cdot)=g(\cdot)$. If $g(u)=h(u)=u-x$, then $\tau_{u-x,a}=\sigma_{a^{\prime}}$ and $\1_{\lbrace U_t<-y \rbrace}=\1_{\lbrace Y_t>y^{\prime} \rbrace}$, where $Y_t=\overline{X}_t-X_t$ is the drawdown process. Then $G_y^{a, u-x, u-x}=\int_0^{\sigma_{a^{\prime}}} \1_{\lbrace Y_t>y^{\prime} \rbrace}dt=:C_{y^{\prime}}^{a^{\prime}}$, which is  equation (21) of Zhang \citeyear{Z13} with the substitutions $y^{\prime}\rightarrow y$ and $a^{\prime}\rightarrow a$ there. Note that $0\leq y^{\prime}<a^{\prime}$ and our formula \eqref{main} reduces to the formula in Theorem 4.5 of Zhang \citeyear{Z13} with the above substitutions. As a sanity check, when $y=-x$ and $g(u)=u-x$, we have $G_0^{a,h,u-x}=\int_0^{\tau_{h,a}}\1_{\lbrace X_t<\overline{X_t} \rbrace}dt=\tau_{h,a}$, $P$-a.s, and for $q\geq 0$, our formula \eqref{main} reduces to
\begin{align}
E_x \left[e^{-q \tau_{h,a}} \right]&=\int_x^{\infty} \frac{s^{\prime}(m)}{W_q(m, h(m)-a)} \exp\left( -\int_x^m  \frac{W_{q,1}(u, h(u)-a)}{W_q(u, h(u)-a)} du\right)dm,\label{lo}
\end{align}
which agrees with formula (21) on p.601 in Lehoczky \citeyear{L77} with substitutions $0\rightarrow \alpha$, $q\rightarrow \beta$ and $m-h(m)\rightarrow u(m)$. If we further take $h(u)=u-x$, then \eqref{lo} reduces to Proposition 3.3 of Zhang \citeyear{Z13}.

If $g(u)=h(u)=0$, then $\tau_{0,a}=\tau^{-}_{-a}$, and $G_y^{a, 0, 0}=\int_0^{\tau^{-}_{-a}} \1_{\lbrace X_t<-y \rbrace}dt$, which represents the occupation time of the process $X$ below $-y$ until its first hitting time to $-a$ from above. From \eqref{main}, we have
%\begin{align}
%E_x[ e^{-q G_y^{a,0, 0}}; \tau_{-a}<\infty ]&= \int_x^{\infty} \frac{ \frac{s^{\prime}(m)}{W_q(y, a)}    }{1+\frac{s(m)-s(y)}{s^{\prime}(y)}\frac{W_{q,1}(y, a)}{W_q(y, a)}}\notag\\
%&\quad \times \exp\left(-\int_x^m \frac{ \frac{s^{\prime}(u)}{s^{\prime}(y)} \frac{W_{q,1}(y, a  )}{W_q(y, a  )}      }{1+ \frac{s(u)-s(y)}{s^{\prime}(y)}  \frac{W_{q,1}(y, a  )}{W_q(y, a  )}   } du   \right)    dm.
%\end{align}
%Now take the limit as $a\rightarrow \infty$ and $y\rightarrow 0$, we have
%\begin{align}
%E_x[ e^{-q \int_0^{\infty} \1_{\lbrace X_t<0 \rbrace}dt} ]&=\lim\limits_{a\rightarrow \infty} \int_x^{\infty} \frac{ \frac{s^{\prime}(m)}{W_q(0, a)}    }{1+\frac{s(m)-s(0)}{s^{\prime}(0)}\frac{W_{q,1}(0, a)}{W_q(0, a)}}\notag\\
%&\quad \times \exp\left(-\int_x^m \frac{ \frac{s^{\prime}(u)}{s^{\prime}(0)} \frac{W_{q,1}(0, a  )}{W_q(0, a  )}      }{1+ \frac{s(u)-s(0)}{s^{\prime}(0)}  \frac{W_{q,1}(0, a  )}{W_q(0, a  )}   } du   \right)    dm\notag\\
%&=\lim\limits_{a\rightarrow \infty} \frac{s(x) W_{q,1}(0,a)+W_q(0,a)}{s(x)W^2_{q,1}(0,a)+W_q(0,a)W_{q,1}(0,a)}
%\end{align}
\begin{align}
E_x[ e^{-q G_y^{a,0, 0}}; \tau^{-}_{-a}<\infty ]&= \int_x^{\infty} \frac{ \frac{s^{\prime}(m)}{W_q(-y, -a)}    }{1+\frac{s(m)-s(-y)}{s^{\prime}(-y)}\frac{W_{q,1}(-y, -a)}{W_q(-y, -a)}}\notag\\
&\quad \times \exp\left(-\int_x^m \frac{ \frac{s^{\prime}(u)}{s^{\prime}(-y)} \frac{W_{q,1}(-y, -a  )}{W_q(-y, -a  )}      }{1+ \frac{s(u)-s(-y)}{s^{\prime}(-y)}  \frac{W_{q,1}(-y, -a  )}{W_q(-y, -a  )}   } du   \right)    dm\notag\\
&=\int_x^{\infty} \frac{s^{\prime}(m)}{W_q(-y, -a)}     \frac{1}{1+A(s(m)-s(-y))} \frac{1+A(s(x)-s(-y))}{1+A(s(m)-s(y))}dm\notag\\
&=\frac{1}{A W_q(-y, -a)}-\frac{1}{A W_q(-y, -a)} \frac{1+A(s(x)-s(-y))}{1+A(s(\infty)-s(-y))}\notag\\
%&=\frac{s(\infty)-s(x)}{W_q(-y, -a) (1+A(s(\infty)-s(y)))}\notag\\
&=\frac{(s(\infty)-s(x))s^{\prime}(-y)}{(s(\infty)-s(-y))W_{q,1}(-y, -a)+s^{\prime}(-y)W_q(-y, -a)},\label{check}
\end{align}
where $A=\frac{W_{q,1}(-y,-a)}{s^{\prime}(-y) W_q (-y, -a)}$. Note that \eqref{check} agrees with the second expression in equation (19) of Proposition $4.1$ of Zhang \citeyear{Z13} by letting $b\rightarrow \infty$ and substituting $-a\rightarrow a$, and $-y\rightarrow y$ there.

\end{re}
%Take $h(u)=g(u)=u-\overline{\gamma}(u)$ and $a\rightarrow \infty$ in Theorem \ref{mt}, we have
%\begin{theorem}
%(Perpetual occupation time for an Azema-Yor process)\label{mtinf}
%
%For $q\geq 0$, $y\geq 0$, if $g(u)\geq h(u)$ for $u\in(0,\infty)$, then we have
%\begin{align}
%E_x[ e^{-q G_y^{\infty,g, g}}; \tau_{h, a}<\infty ]&= \int_x^{\infty} \frac{ \frac{s^{\prime}(m)}{W_q(g(m)-y, h(m)-a)}    }{1+\frac{s(m)-s(g(m)-y)}{s^{\prime}(g(m)-y)}\frac{W_{q,1}(g(m)-y, h(m)-a)}{W_q(g(m)-y, h(m)-a)}}\notag\\
%&\quad \times \exp\left(-\int_x^m \frac{ \frac{s^{\prime}(u)}{s^{\prime}(g(u)-y)} \frac{W_{q,1}(g(u)-y, h(u)-a  )}{W_q(g(u)-y, h(u)-a  )}      }{1+ \frac{s(u)-s(g(u)-y)}{s^{\prime}(g(u)-y)}  \frac{W_{q,1}(g(u)-y, h(u)-a  )}{W_q(g(u)-y, h(u)-a  )}   } du   \right)    dm.\label{maininf}
%\end{align}
%\end{theorem}
%\begin{proof}
%Take $h(u)=g(u)$ and let $a\rightarrow \infty$
%
%\qed
%\end{proof}

\subsection{Occupation time of the Azema-Yor process until an independent exponential time}
For $y\geq -x$, consider the occupation time of $U$ below $-y$ until an independent exponential time $e_q, q>0$:
$$O^{q,g}_y :=\int_0^{e_q}\1_{\lbrace U_t<-y\rbrace}dt.$$

\begin{theorem}\label{mp}
For all $p, q>0$ and $y\geq -x$
\begin{align}
E_{x}[e^{-p O^{q,g}_y}]&=1-\exp\left( -\int_x^{\infty} \frac{W_{q,2}(g(u)-y,u)+W_{q,1}(u, g(u)-y)\frac{\phi^{+\prime}_{q+p}(g(u)-y)}{\phi^{+}_{q+p}(g(u)-y)}}{W_{q,1}(g(u)-y,u)+W_{q}(u, g(u)-y)\frac{\phi^{+\prime}_{q+p}(g(u)-y)}{\phi^{+}_{q+p}(g(u)-y)}}du    \right)\notag\\
&\quad - \int_x^{\infty} \exp\left( -\int_x^m \frac{W_{q,2}(g(u)-y,u)+W_{q,1}(u, g(u)-y)\frac{\phi^{+\prime}_{q+p}(g(u)-y)}{\phi^{+}_{q+p}(g(u)-y)}}{W_{q,1}(g(u)-y,u)+W_{q}(u, g(u)-y)\frac{\phi^{+\prime}_{q+p}(g(u)-y)}{\phi^{+}_{q+p}(g(u)-y)}} du       \right)\notag\\
&\quad\quad\quad \times \frac{ \frac{p}{q+p} s^{\prime}(m) \frac{\phi^{+\prime}_{q+p}(g(m)-y)}{\phi^{+}_{q+p}(g(m)-y)} }{W_{q,1}(g(m)-y,m)+W_{q}(m, g(m)-y)\frac{\phi^{+\prime}_{q+p}(g(m)-y)}{\phi^{+}_{q+p}(g(m)-y)}}dm.\label{main2}
\end{align}
%where $g(x)=x-\overline{\gamma}(x)$.
%where $0\leq g(x)<x$.
\end{theorem}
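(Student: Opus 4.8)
The plan is to strip off the exponential horizon, recast the quantity as a two-rate killed problem, and then re-run the excursion machinery behind Theorem \ref{mt}; the genuinely new ingredient is a discount rate that jumps across the moving interface $X=g(\overline{X})-y$. First I would dispose of the horizon. Setting $\kappa_t=qt+p\int_0^t\1_{\lbrace U_s<-y\rbrace}ds$, one has $\kappa_0=0$ and, since $q>0$, $\kappa_\infty=\infty$ $P_x$-a.s., so pathwise $1=\int_0^\infty(q+p\1_{\lbrace U_t<-y\rbrace})e^{-\kappa_t}dt$. Integrating out the independent clock $e_q$ identifies $qE_x[\int_0^\infty e^{-\kappa_t}dt]$ with $E_x[e^{-pO^{q,g}_y}]$, whence
$$1-E_x[e^{-p O^{q,g}_y}]=p\,E_x\!\left[\int_0^{\infty}\1_{\lbrace U_t<-y\rbrace}e^{-\kappa_t}\,dt\right]=E_x[e^{-q\tau_p}],$$
where $\tau_p$ denotes the first ring of an independent Poisson clock of rate $p$ active only on $\lbrace U<-y\rbrace$. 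This accounts for the leading $1-\cdots$ in \eqref{main2} and recasts the target as a functional of $X$ killed at rate $q$ everywhere and at rate $q+p$ on $\lbrace X<g(\overline{X})-y\rbrace$.

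Next I would reuse the path decomposition of Proposition \ref{de}, now taking $\rho$ to be the last time $X$ meets $\overline{X}$ before the killing; conditionally on $X_\rho=m$ the pre- and post-$\rho$ fragments are independent, with the running maximum increasing on $[0,\rho]$ and frozen at $m$ on the terminal killed excursion. On any excursion below a level $m$ the factor $g(\overline{X})$ is frozen at $g(m)$, so the excursion runs in $(l,m)$ with discount rate $q$ on the upper band $(g(m)-y,m)$ and rate $q+p$ on the lower band $(l,g(m)-y)$. The upper band is controlled by the $q$-scale quantities $W_q,W_{q,1},W_{q,2}$ on $[g(m)-y,m]$ exactly as in Theorem \ref{mt}, while the lower band, which here descends to the boundary $l$ instead of to a hitting level $h(m)-a$, is controlled by the increasing solution $\phi^{+}_{q+p}$, the solution selected by the lower boundary $l$. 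Matching value and scale-derivative at the interface $g(m)-y$ then replaces the Dirichlet datum used at $h(m)-a$ in Theorem \ref{mt} by the Robin datum $\phi^{+\prime}_{q+p}(g(m)-y)/\phi^{+}_{q+p}(g(m)-y)$, which is precisely the object recurring throughout \eqref{main2}.

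Then, mirroring the computation leading to \eqref{key1}, I would partition $[x,\infty)$ into steps of size $\varepsilon$, write the event that the running maximum climbs from one level to the next as a product of per-step two-sided exit probabilities for the \emph{glued} problem (the analogues of equations (18)-(19) of Zhang \citeyear{Z13}, but with the interface datum of the previous step inserted at $g(u)-y$), and pass to the limit so that the product becomes $\exp(-\int_x^m\Lambda(u)\,du)$ with $\Lambda$ equal to the first fraction in \eqref{main2}. The terminal killed excursion at level $m$ supplies the second integrand of \eqref{main2}, the weight $p/(q+p)$ being the conditional probability that, given the terminal excursion is killed in the lower band, the rate-$p$ clock fires before the rate-$q$ clock, i.e. the event recorded by $E_x[e^{-q\tau_p}]$. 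Integrating $m$ over $[x,\infty)$ against this integrating factor and adding the boundary contribution $\exp(-\int_x^\infty\Lambda)$ (which comes from the behaviour of the integrating factor as the running maximum approaches the upper boundary and is non-trivial only when that boundary is accessible in scale) would then yield \eqref{main2}.

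The step I expect to be the main obstacle is the two-sided exit computation across the interface: unlike in Theorem \ref{mt}, the discount rate is discontinuous at $g(m)-y$, so one must solve the Sturm-Liouville equation separately on the two bands and splice the two solutions by continuity of value and flux at $g(m)-y$. Recognising that the admissible lower-band solution is $\phi^{+}_{q+p}$ and that its logarithmic derivative is the correct interface datum is exactly what upgrades the $W_q$-only formula of Theorem \ref{mt} to \eqref{main2}, and it is where the $\phi^{+}_{q+p}$ factors and the $p/(q+p)$ weight both enter. A secondary technical point is justifying the level-discretization limit and handling the boundary term at $m=\infty$, which requires control of $\Lambda$ near the upper boundary.
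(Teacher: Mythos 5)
Your proposal is correct and follows essentially the same route as the paper: the paper's proof of Theorem \ref{mp} is precisely the path-decomposition/level-discretization argument of Zhang's Theorem 4.7 transplanted with $u-y\rightarrow g(u)-y$, exploiting that $g(\overline{X})$ is frozen on excursions below the running maximum and checking $g(x+i\varepsilon)-y\leq x+i\varepsilon<x+(i+1)\varepsilon$ so that Zhang's Corollary 4.2 (the two-rate exit identity) applies, which is exactly the structure you reconstruct. The only differences are presentational: you re-derive that two-rate identity by splicing the Sturm--Liouville solutions at the interface $g(m)-y$ (matching value and scale-flux, with $\phi^{+}_{q+p}$ selected by the lower boundary) where the paper simply cites Zhang, and you make explicit at the outset the reduction $1-E_x[e^{-p O^{q,g}_y}]=E_x[e^{-q\tau_p}]$ via the Cox clock, which the paper invokes only later in Theorem \ref{omega1}.
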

\begin{proof}
We consider $\1_{\lbrace U_t<-y\rbrace}$ instead of $\1_{\lbrace Y_t>y\rbrace}$, and the proof is based on similar non-trivial adaptations of that of Theorem 4.7 in Zhang \citeyear{Z13} by substituting $g(u)-y\rightarrow u-y$ and $g(m)-y\rightarrow m-y$ throughout. Note that $g(x+i\varepsilon)-y\leq x+i\varepsilon-x-y \leq x+i\varepsilon <x+(i+1)\varepsilon$, for $i=0,1...,N-1$ and $\varepsilon>0$. Thus we can safely apply Corollary $4.2$ of Zhang \citeyear{Z13} at an intermediate step of the proof.  \qed
\end{proof}
\begin{re}
If $g(u)=u-x$, then $O^{q,u-x}_y=\int_0^{e_q} \1_{\lbrace Y_t>y^{\prime} \rbrace}dt=:E_{y^{\prime}}^q$ as defined in equation (29) of Zhang \citeyear{Z13}. In this case our formula \eqref{main2} reduces to equation $(30)$ of his Theorem $4.7$ with substitution $y^{\prime}\rightarrow y$ there. As a sanity check, if $y=-x$ and $g(u)=u-x$, formula \eqref{main2} reduces to $E_{x}[e^{-p O^{q,u}_0}]=\frac{q}{q+p}$. On the other hand, $E_{x}[e^{-p O^{q,u-x}_{-x}}]=E_{x}[e^{-p  \int_0^{e_q}\1_{\lbrace X_t<\overline{X}_t \rbrace}dt}]=E_{x}[e^{-p e_q}]=\frac{q}{q+p}$.
\end{re}

If  $g(u)=0$, then for $y\geq -x$, $O^{q,0}_{-x}=\int_0^{e_q} \1_{\lbrace X_t<-y \rbrace}dt$, which represents the occupation time of the process $X$ below $-y$ until an independent exponential time.
 In the following, denote $A=\frac{\phi^{+\prime}_{q+p}(-y)}{\phi^{+}_{q+p}(-y)}$, and note that $W_{q,1}(-y,m)+W_{q}(m, -y)A= B \phi^{+}_q(m)+C \phi^{-}_q(m)$ with $B=A \phi_q^{-}(-y)-\phi_q^{-\prime}(-y)$ and $C=\phi_q^{+\prime}(-y)-A \phi_q^{+}(-y)$. From \eqref{main2}, for all $p, q>0$, we have
\begin{align}
&E_{x}[e^{-p O^{q,0}_{y}}]=1-\exp\left( -\int_x^{\infty} \frac{W_{q,2}(-y,u)+W_{q,1}(u, -y)\frac{\phi^{+\prime}_{q+p}(-y)}{\phi^{+}_{q+p}(-y)}}{W_{q,1}(-y,u)+W_{q}(u, -y)\frac{\phi^{+\prime}_{q+p}(-y)}{\phi^{+}_{q+p}(-y)}}du    \right)\notag\\
&\quad - \int_x^{\infty} \exp\left( -\int_x^m \frac{W_{q,2}(-y,u)+W_{q,1}(u, -y)\frac{\phi^{+\prime}_{q+p}(-y)}{\phi^{+}_{q+p}(-y)}}{W_{q,1}(-y,u)+W_{q}(u, -y)\frac{\phi^{+\prime}_{q+p}(-y)}{\phi^{+}_{q+p}(-y)}} du       \right)\times \frac{ \frac{p}{q+p} s^{\prime}(m) \frac{\phi^{+\prime}_{q+p}(-y)}{\phi^{+}_{q+p}(-y)} }{W_{q,1}(-y,m)+W_{q}(m, -y)\frac{\phi^{+\prime}_{q+p}(-y)}{\phi^{+}_{q+p}(-y)}}dm\notag\\
&=1-\lim\limits_{u\rightarrow \infty}\frac{W_{q,1}(-y,x)+W_{q}(x, -y)A}{W_{q,1}(-y,u)+W_{q}(u, -y)A} \notag\\
&\quad \quad -\int_x^{\infty} \frac{W_{q,1}(-y,x)+W_{q}(x, -y)A}{W_{q,1}(-y,m)+W_{q}(m, -y)A}  \times \frac{\frac{p}{q+p}s^{\prime}(m)A  }{W_{q,1}(-y,m)+W_{q}(m, -y)A}  dm\notag\\
&=1+\frac{p}{q+p} \frac{A}{C}\phi_q^{+}(x)-(B \phi^{+}_q(x)+C \phi^{-}_q(x)) \lim\limits_{u\rightarrow \infty} \frac{1+\frac{p}{q+p} \frac{A}{C}\phi_q^{+}(u)}{B \phi^{+}_q(u)+C \phi^{-}_q(u)}   \notag\\
&= 1+\frac{p}{q+p} \frac{A}{C}\phi_q^{+}(x)-(B \phi^{+}_q(x)+C \phi^{-}_q(x)) \frac{p}{q+p} \frac{A}{BC}\notag\\
&= 1-\frac{p}{q+p} \frac{\phi_{q+p}^{+\prime}(-y)\phi_q^{-}(x)  }{\phi_{q+p}^{+\prime}(-y)\phi_q^{-}(-y)-\phi_q^{-\prime}(-y) \phi_{q+p}^{+}(-y)}, \label{check2}
\end{align}
where the second last equality is due to the fact $\lim\limits_{u\rightarrow \infty}\phi_q^{+}(u)=\infty$.
%Plug in the definition of $A,B$ and $C$, we have the final expression
%\begin{align}
%E_{x}[e^{-p O^{q,0}_y}]&= \label{check3}
%\end{align}
To be consistent with the notations in Li and Zhou \citeyear{LZ13}, define $\psi_q^{\pm}(\cdot)=\pm\frac{\phi_q^{\pm\prime}(\cdot)}{\phi_q^{\pm}(\cdot)}$. If we take  $x=0$ and $y=-x=0$, then \eqref{check2} becomes
\begin{align}
E_{0}[e^{-p O^{q,0}_{0}}]&= 1-\frac{p}{q+p} \frac{\phi_{q+p}^{+\prime}(0)\phi_q^{-}(0)  }{\phi_{q+p}^{+\prime}(0)\phi_q^{-}(0)-\phi_q^{-\prime}(0) \phi_{q+p}^{+}(0)}=\frac{\frac{q}{q+p}\psi_{q+p}^{+}(0) +\psi_q^{-}(0)}{\psi_{q+p}^{+}(0)+\psi_q^{-}(0)}, \label{check3}
\end{align}
which agrees with Theorem $3.1$ of Li and Zhou \citeyear{LZ13} with substitutions $p\rightarrow \lambda$ and $q\rightarrow \delta$ there. We further generalize the Corollary $3.2$ of Li and Zhou \citeyear{LZ13} to non-zero levels as follows.
\begin{proposition}
In general, for $x\geq b$
\begin{align}
E_x \left[ e^{-p\int_0^{e_q} \1_{\lbrace X_t <b  \rbrace}dt }  \right]&= 1-\frac{p}{q+p} \frac{\phi_{q+p}^{+\prime}(b)\phi_q^{-}(x)  }{\phi_{q+p}^{+\prime}(b)\phi_q^{-}(b)-\phi_q^{-\prime}(b) \phi_{q+p}^{+}(b)}.\label{18}
\end{align}
and for $x<b$
\begin{align}
E_x \left[ e^{-p\int_0^{e_q} \1_{\lbrace X_t <b  \rbrace}dt }  \right]&=\frac{p}{q+p} \frac{\phi_{q+p}^{+}(x)}{\phi_{q+p}^{+}(b)}\left(1- \frac{\phi_{q+p}^{+\prime}(b)\phi_q^{-}(b)  }{\phi_{q+p}^{+\prime}(b)\phi_q^{-}(b)-\phi_q^{-\prime}(b) \phi_{q+p}^{+}(b)}\right)+\frac{q}{q+p}.\label{19}
\end{align}
\end{proposition}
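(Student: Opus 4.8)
The plan is to treat the two regimes $x\geq b$ and $x<b$ separately. For $x\geq b$, formula \eqref{18} is nothing but the identity \eqref{check2} already derived above under the substitution $b=-y$: indeed the standing constraint $y\geq -x$ is exactly $-y\leq x$, i.e. $b\leq x$, so \eqref{check2} yields \eqref{18} with no extra work. Hence the only genuinely new content is the sub-level case $x<b$, which I would obtain by a first-passage decomposition that ultimately reduces the post-passage contribution to the boundary value of \eqref{18} at $x=b$.

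For $x<b$, let $\tau_b^{+}=\inf\{t>0:X_t\geq b\}$ and split according to whether the exponential clock rings before or after $X$ first reaches $b$. On $\{e_q<\tau_b^{+}\}$ the whole path up to $e_q$ lies strictly below $b$, so the occupation time equals $e_q$; using that $e_q\sim\mathrm{Exp}(q)$ is independent of $X$ and the standard first-passage transform $E_x[e^{-(q+p)\tau_b^{+}}]=\phi_{q+p}^{+}(x)/\phi_{q+p}^{+}(b)$ (with $\phi_{q+p}^{+}$ the increasing Sturm--Liouville solution), a short Fubini computation gives
$$E_x[e^{-p e_q}; e_q<\tau_b^{+}]=\frac{q}{q+p}\Big(1-E_x[e^{-(q+p)\tau_b^{+}}]\Big)=\frac{q}{q+p}\Big(1-\frac{\phi_{q+p}^{+}(x)}{\phi_{q+p}^{+}(b)}\Big).$$
On the complementary event $\{e_q\geq\tau_b^{+}\}$ the integral up to $\tau_b^{+}$ contributes exactly $\tau_b^{+}$ (the path is below $b$ there), which I would peel off as a factor $e^{-p\tau_b^{+}}$, leaving a residual occupation integral that restarts at $X_{\tau_b^{+}}=b$.

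The crux is the factorization of this second term. Conditioning on $\mathcal{F}_{\tau_b^{+}}$ and invoking the memorylessness of the exponential, the event $\{e_q>\tau_b^{+}\}$ contributes the factor $e^{-q\tau_b^{+}}$, the residual clock $e_q-\tau_b^{+}$ is again $\mathrm{Exp}(q)$ and independent, and the strong Markov property at $\tau_b^{+}$ identifies the remaining expectation with the $x=b$ instance of the same problem, so that
$$E_x[e^{-p\int_0^{e_q}\1_{\{X_t<b\}}dt}; e_q\geq\tau_b^{+}]=E_x[e^{-(q+p)\tau_b^{+}}]\,E_b[e^{-p\int_0^{e_q}\1_{\{X_t<b\}}dt}]=\frac{\phi_{q+p}^{+}(x)}{\phi_{q+p}^{+}(b)}\,E_b[\cdots].$$
Substituting the value of $E_b[\cdots]$ from \eqref{18} at $x=b$, adding the two contributions, and collecting the coefficient of $\phi_{q+p}^{+}(x)/\phi_{q+p}^{+}(b)$ (using $1-\tfrac{q}{q+p}=\tfrac{p}{q+p}$) produces exactly \eqref{19}. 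The main obstacle I anticipate is the rigorous justification of this factorization: one must carefully disentangle the independent exponential horizon from the strong Markov restart, since the weight $e^{-p\tau_b^{+}}$ coming from the occupation time and the weight $e^{-q\tau_b^{+}}$ coming from survival of the clock combine into $E_x[e^{-(q+p)\tau_b^{+}}]$ only after the correct conditioning on $\mathcal{F}_{\tau_b^{+}}$.
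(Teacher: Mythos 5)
Your proposal is correct and takes essentially the same route as the paper: the case $x\geq b$ is obtained by setting $y=-b$ in \eqref{check2}, and the case $x<b$ uses the identical decomposition at $\tau_b^{+}$ via memorylessness of $e_q$ and the strong Markov property, yielding $E_x[e^{-(q+p)\tau_b^{+}}]E_b[\cdots]+\frac{q}{q+p}\left(1-E_x[e^{-(q+p)\tau_b^{+}}]\right)$ exactly as in the paper. Your closing ingredients also match: the first-passage transform $E_x[e^{-(q+p)\tau_b^{+}}]=\phi_{q+p}^{+}(x)/\phi_{q+p}^{+}(b)$ (for which the paper cites Lemma 2.2 of Zhang \citeyear{Z13}) together with the $x=b$ instance of \eqref{18}.
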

\begin{proof}
For $x\geq b$, take $y=-b\geq -x$ in \eqref{check2}, and we have the desired result in \eqref{18}.
%\begin{align}
%E_x \left[ e^{-p\int_0^{e_q} \1_{\lbrace X_t <b  \rbrace}dt }  \right]&=P_x(\tau_b^{-}<e_q) E_b \left[ e^{-p\int_0^{e_q} \1_{\lbrace X_t <b  \rbrace}dt }  \right]+P_x(e_q< \tau_b^{-}).
%\end{align}
%Then \eqref{18} follows from taking $y=-b$ and $x=b$ in equation \eqref{check2} and Lemma $2.2$ of Zhang \citeyear{Z13}.

For $x<b$, from the memoryless property of $e_q$ and the strong Markov property of $X$, we have
\begin{align}
E_x \left[ e^{-p\int_0^{e_q} \1_{\lbrace X_t <b  \rbrace}dt }  \right]&=E_x \left[e^{-p \tau_b^{+}}; \tau_b^{+}<e_q\right] E_b \left[ e^{-p\int_0^{e_q} \1_{\lbrace X_t <b  \rbrace}dt }  \right]+E_x \left[e^{-p e_q}; e_q<\tau_b^{+}\right]\notag\\
&=E_x \left[e^{-(q+p) \tau_b^{+}}\right]E_b \left[ e^{-p\int_0^{e_q} \1_{\lbrace X_t <b  \rbrace}dt }  \right] +\frac{q}{q+p} \left(1-E_x \left[e^{-(q+p) \tau_b^{+}}\right]\right).
\end{align}
Then \eqref{19} follows from taking $y=-b$ and $x=b$ in equation \eqref{check2} and Lemma $2.2$ of Zhang \citeyear{Z13}. \qed
\end{proof}
\begin{re}
If we take $b=0$ in \eqref{18} and \eqref{19}, then they reduce to equations (18) and (19) in Corollary $3.2$ of Li and Zhou \citeyear{LZ13}. Note that \eqref{18} and \eqref{19} are equal at the boundary case when $x=b$.
\end{re}

Now we propose the Omega risk model with tax, and assume that $U$ is the after-tax surplus value process given in \eqref{U}, which corresponds to the Azema-Yor process in \eqref{ay} with $g(u)=u-\overline{\gamma}(u)=\int_x^u \gamma(z)dz$. For $y\geq 0$ and the bankruptcy rate function $\omega(.)>0$, define the \textit{time of bankruptcy} as
\begin{align}
\hat{\tau}_{\omega}&:=\inf\left\{t\geq 0: \int_0^t \omega(U_s)\1_{\lbrace U_s<-y\rbrace}ds >e_1  \right\},\label{ttax}
\end{align}
where $e_1$ is an independent exponential random variable with unit rate.
Now we are in the position to give our main result on the Laplace transform of $\hat{\tau}_{\omega}$ in the Omega risk model with tax.

\begin{theorem}\label{omega1}   If the bankruptcy rate function $\omega(\cdot)=\omega$ for a positive constant $\omega$, then for $q>0$, $y\geq -x$, the Laplace transform of the time of bankruptcy in the Omega risk model with tax is
 \begin{align}
E_{x}\left[e^{-q \hat{\tau}_{\omega}}\right]&=1-E_x\left[  e^{-\omega \int_0^{e_q} \1_{\lbrace U_s<-y\rbrace}ds}\right]\notag\\
&=\exp\left( -\int_x^{\infty} \frac{W_{q,2}(u-\overline{\gamma}(u)-y,u)+W_{q,1}(u, u-\overline{\gamma}(u)-y)\frac{\phi^{+\prime}_{q+\omega}(u-\overline{\gamma}(u)-y)}{\phi^{+}_{q+\omega}(u-\overline{\gamma}(u)-y)}}{W_{q,1}(u-\overline{\gamma}(u)-y,u)+W_{q}(u, u-\overline{\gamma}(u)-y)\frac{\phi^{+\prime}_{q+\omega}(u-\overline{\gamma}(u)-y)}{\phi^{+}_{q+\omega}(u-\overline{\gamma}(u)-y)}}du    \right)\notag\\
&\quad + \int_x^{\infty} \exp\left( -\int_x^m \frac{W_{q,2}(u-\overline{\gamma}(u)-y,u)+W_{q,1}(u, u-\overline{\gamma}(u)-y)\frac{\phi^{+\prime}_{q+\omega}(u-\overline{\gamma}(u)-y)}{\phi^{+}_{q+\omega}(u-\overline{\gamma}(u)-y)}}{W_{q,1}(u-\overline{\gamma}(u)-y,u)+W_{q}(u, u-\overline{\gamma}(u)-y)\frac{\phi^{+\prime}_{q+\omega}(u-\overline{\gamma}(u)-y)}{\phi^{+}_{q+\omega}(u-\overline{\gamma}(u)-y)}} du       \right)\notag\\
&\quad\quad\quad \times \frac{ \frac{\omega}{q+\omega} s^{\prime}(m) \frac{\phi^{+\prime}_{q+\omega}(m-\overline{\gamma}(m)-y)}{\phi^{+}_{q+\omega}(m-\overline{\gamma}(m)-y)} }{W_{q,1}(m-\overline{\gamma}(m)-y,m)+W_{q}(m, m-\overline{\gamma}(m)-y)\frac{\phi^{+\prime}_{q+\omega}(m-\overline{\gamma}(m)-y)}{\phi^{+}_{q+\omega}(m-\overline{\gamma}(m)-y)}}dm.\label{mainomegatax}
\end{align}
\end{theorem}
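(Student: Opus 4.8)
The plan is to recognize Theorem \ref{omega1} as a direct specialization of Theorem \ref{mp}, combined with the standard occupation-time representation of the Laplace transform of $\hat\tau_\omega$. First I would establish the identity
\begin{align*}
E_x[e^{-q\hat\tau_\omega}] = 1 - E_x\left[e^{-\omega\int_0^{e_q}\1_{\lbrace U_s<-y\rbrace}ds}\right],
\end{align*}
exactly as in \eqref{lz3}. Conditioning on the path of $U$ and on the independent clock $e_q$, the event $\lbrace\hat\tau_\omega>e_q\rbrace$ holds iff the unit exponential $e_1$ exceeds $\int_0^{e_q}\omega\1_{\lbrace U_s<-y\rbrace}ds$, which has conditional probability $\exp(-\int_0^{e_q}\omega\1_{\lbrace U_s<-y\rbrace}ds)$; here the constant value of $\omega$ is precisely what lets the integrand be $\omega\1_{\lbrace U_s<-y\rbrace}$ rather than $\omega(U_s)\1_{\lbrace U_s<-y\rbrace}$. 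Taking expectations and using $E_x[e^{-q\hat\tau_\omega}]=P_x(\hat\tau_\omega<e_q)=1-P_x(\hat\tau_\omega>e_q)$ gives the display.

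Next I would match the occupation functional to the object governed by Theorem \ref{mp}. Since the after-tax surplus in \eqref{U} is the Azema-Yor process $U_t=X_t-g(\overline{X}_t)$ with $g(u)=u-\overline{\gamma}(u)=\int_x^u\gamma(z)\,dz$, the functional $\int_0^{e_q}\1_{\lbrace U_s<-y\rbrace}ds$ is exactly $O^{q,g}_y$ for this $g$. The hypotheses of Theorem \ref{mp} hold: $q,\omega>0$, $y\geq-x$, and since $\gamma$ takes values in $[0,1)$ we have $0\leq g(u)=\int_x^u\gamma(z)\,dz\leq u-x$ with $g(x)=0$, so $g$ is admissible and the shifted argument obeys $g(u)-y\leq u-x-y\leq u$, which is the inequality invoked in the proof of Theorem \ref{mp}.

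The final step is purely substitutional: apply Theorem \ref{mp} with $p=\omega$ and $g(u)=u-\overline{\gamma}(u)$, so that $g(u)-y=u-\overline{\gamma}(u)-y$ and $g(m)-y=m-\overline{\gamma}(m)-y$ throughout formula \eqref{main2}. Subtracting the resulting expression for $E_x[e^{-\omega O^{q,g}_y}]$ from $1$ cancels the leading $1$, flips the sign of both the $\exp(\cdot)$ term and the integral term, and reproduces \eqref{mainomegatax} verbatim.

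I do not anticipate a genuine obstacle, because all of the analytic machinery---the path decomposition of Proposition \ref{de}, the limiting product argument, and the appeals to the auxiliary results of Zhang \citeyear{Z13}---has already been discharged within Theorem \ref{mp}. The only point deserving care is the measure-theoretic bookkeeping in the first step, namely that $\hat\tau_\omega$ has no atom at $e_q$ so that $P_x(\hat\tau_\omega\geq e_q)=P_x(\hat\tau_\omega>e_q)$; this holds because $t\mapsto\int_0^t\1_{\lbrace U_s<-y\rbrace}ds$ is continuous and $e_q$ admits a density. For a genuinely state-dependent rate $\omega(\cdot)$ this clean identification with $O^{q,g}_y$ would break down and a different argument would be required, which is why the theorem restricts to constant $\omega$.
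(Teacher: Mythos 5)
Your proposal is correct and follows exactly the paper's route: the paper's proof is the one-line observation that \eqref{mainomegatax} follows from Theorem \ref{mp} by substituting $\omega\rightarrow p$ and $u-\overline{\gamma}(u)\rightarrow g(u)$, which is precisely your final step. Your additional work---deriving $E_x[e^{-q\hat\tau_\omega}]=1-E_x[e^{-\omega\int_0^{e_q}\1_{\lbrace U_s<-y\rbrace}ds}]$ via the argument behind \eqref{lz3}, verifying admissibility of $g(u)=\int_x^u\gamma(z)\,dz$, and noting the atomlessness point---is sound and merely makes explicit what the paper leaves implicit.
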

\begin{proof}
The desired expression is a consequence of Theorem \ref{mp} by substituting $\omega\rightarrow p$ and $u-\overline{\gamma}(u) \rightarrow g(u)$. \qed
\end{proof}

%{\cred Need more care when dealing with the probability of default. Can put in some heuristic analysis there for the case of $y=0$. Seems to be a tough job though.}

%Take $h(u)=g(u)=u-\overline{\gamma}(u)$ and $a\rightarrow \infty$ in Theorem \ref{mt}, we have
%\begin{corollary}\label{omega2}
%When the bankruptcy function is $\omega(x)=c$ for a positive constant $c$, for $y\geq 0$, the \textit{probability of bankruptcy} in the Omega risk model with tax is
% \begin{align}
%P_{x}\left(\hat{\tau}_{\omega}<\infty \right)
%&=1-.\label{mainomegatax2}
%\end{align}
%\end{corollary}
%\begin{re}
%In general, we can not directly let $q\rightarrow 0$ in Theorem \ref{omega1} to obtain the above result because the proof of Theorem \ref{mp} depends on conditioning on $\overline{X}_{e_q}$, and it may have a degenerate density when $q\rightarrow 0$(Zhang \citeyear{Z14}).
%\end{re}
%

\subsection{Applications of main results\label{sc}}

In this section, we look at some interesting applications of Theorem \ref{mt} and \ref{mp}.
%Some results are new to the literature and are of independent interest since they characterize the occupation time of a general Azema-Yor process until the first hitting time of an (possibly different) Azema-Yor process.

\subsubsection{\textbf{Occupation time of $U$ below $-y$ until $X$ first hits $-a$}}
\textit{ }
\vspace{0.1cm}

If $h(\cdot)=0\leq g(\cdot)$, then $V_t=X_t$ and $\tau_{0,a}=\tau^{-}_{-a}$, $P$-a.s. Thus the Laplace transform of $G_y^{a,0, g}=\int_0^{\tau^{-}_{-a}} \1_{\lbrace U_t<-y\rbrace}dt$ is given by \eqref{main} with substitution $0\rightarrow h(\cdot)$. If we consider the Omega risk model with tax and set $g(u)=u-\overline{\gamma}(u)$, then $G_y^{a,0, g}$ represents the occupation time of the after-tax process $U$ below $-y$ until the before-tax process $X$ first hits $-a$ from above.

\subsubsection{\textbf{Occupation time of $U$ below $-y$ until $U$ first hits $-a$}}
\textit{ }
\vspace{0.1cm}

If $h(u)=g(u)$, then $V_t=U_t$ and $\tau_{h,a}=\inf\lbrace t>0: U_t \geq -a \rbrace=:\tau_{g,a}$, $P$-a.s. Thus the Laplace transform of $G_y^{a,g, g}=\int_0^{\tau_{g, a}} \1_{\lbrace U_t<-y\rbrace }dt$ is given by \eqref{main} with substitution $g(\cdot) \rightarrow h(\cdot)$. If we consider the Omega risk model with tax and set $g(u)=u-\overline{\gamma}(u)$, then $G_y^{a,g, g}$ represents the occupation time of the after-tax process $U$ below $-y$ until it first hits $-a$ from above.

Instead of using an independent exponential random variable $e_q$ as in Theorem \ref{omega1},  we define the time of bankruptcy as the first instant either when this occupation time exceeds a grace period or when the process hits $-a$. The value of $a$ is usually set large due to the following economical motivations: the firm is declared bankrupt if its surplus value goes below a less severe level $-y$ for a grace period(\textit{reorganization} stage of the  U.S. Chapter 11 Bankruptcy code), or that it is immediately bankrupt when its surplus value first goes below a very severe level $-a$(\textit{immediate liquidation} stage of the U.S. Chapter 7 Bankruptcy code). Similar considerations of using this occupation time to model bankruptcy have appeared in recent literature and for a pointer to the literature, please refer to Li \citeyear{L13} and the references therein. %{\cred cite Bin Li's thesis, quote chapter 7 and chapter 11...}

\subsubsection{\textbf{Occupation time of the relative drawdown of $X$ over size $\alpha$ until $X$ or $U$ first hits $-a$}}
\textit{ }
\vspace{0.1cm}

In market practice, drawdown events are often quoted in percentages rather than in the absolute sense. Assume $x>0$, and define the first relative drawdown of $X$ over a fixed size $\alpha\in(0,1)$ as
\begin{align}
\eta_{\alpha}&:=\inf\left\{ t>0: \frac{\overline{X}_t -X_t}{ \overline{X}_t} \geq \alpha \right\}=\inf\left\{ t>0: X_t-(1-\alpha)\overline{X}_t \leq 0 \right\},
\end{align}
and it has been studied for example in Hadjiliadis and Vecer \citeyear{HV06}, Pospisil, Vecer and Hadjiliadis \citeyear{PVH09}, Zhang and Hadjiliadis \citeyear{ZH10}, and it is of particular importance to the modeling of ``Market Crashes"(Zhang and Hadjiliadis \citeyear{ZH12}). For a pointer to the recent literature, please refer to Zhang \citeyear{Z10} and the references therein.

If $h(u)=(1-\beta)(u-x)$ for $\beta\in[\alpha, 1]$, $g(u)=(1-\alpha)(u-x)\geq h(u)$, $y=(1-\alpha)x$ and take $a>(1-\alpha)x$, then $\1_{\lbrace U_t <-y \rbrace}=\1_{\lbrace (\overline{X}_t -X_t)/(\overline{X}_t) > \alpha \rbrace}$.
%$\eta_{\alpha}=\inf\lbrace t>0:  X_t-(1-\alpha) \overline{X}_t \leq 0 \rbrace=\tau_{g, y}$.
%Thus the occupation time of $U$ below $-y$ until a relative drawdown of $X$ over size $\alpha$ is given by  $G_y^{0, (1-\alpha)u, g}=\int_0^{\tau_{(1-\alpha)u, 0}} \1_{\lbrace U_t<-y\rbrace }dt$.
 Thus the Laplace transform of $G_{(1-\alpha)x}^{a,h,g}=\int_0^{\tau_{h,a}} \1_{\lbrace (\overline{X}_t -X_t)/(\overline{X}_t)> \alpha \rbrace}dt$ is given by \eqref{main} with the corresponding substitutions. It represents the occupation time of the relative drawdown of $X$ over size $\alpha$ until the after-tax process $V$(with constant tax rate $\beta$) first hits $-a$ from above. If $\beta=1$, then $h(u)=0$, and $G_{(1-\alpha)x}^{a,0,(1-\alpha)u}$ is the occupation time of the relative drawdown of $X$ over size $\alpha$ until the before-tax process $X$ first hits $-a$ from above. If $\beta=\alpha$. then $h(\cdot)=g(\cdot)$, and $G_{(1-\alpha)x}^{a,(1-\alpha)u,(1-\alpha)u}$ is the occupation time of the relative drawdown of the before-tax process $X$ over size $\alpha$ until the after-tax process $U$(with constant tax rate $\alpha$) first hits $-a$ from above.

Similarly, if we are in the setting of a generalized risk model with random observations  introduced in Albrecher, Cheung and Thonhauser (\citeyear{ACT11} \citeyear{ACT13}), then the time of bankruptcy is linked to the occupation time until an independent exponential time $e_q$, $q>0$. If  $g(u)=(1-\alpha)(u-x)$, $y=(1-\alpha)x$, then the Laplace transform of $O_{y}^{q,(1-\alpha)(u-x)}=\int_0^{e_q} \1_{\lbrace U_t <-y \rbrace} dt=\int_0^{e_q} \1_{\lbrace (\overline{X}_t -X_t)/(\overline{X}_t) > \alpha \rbrace} dt$ is given by \eqref{main2} with the corresponding substitutions. It represents the occupation time of the relative drawdown of $X$ over size $\alpha$ until an independent exponential time. It has applications in pricing a digital call on the \textit{relative} drawdown process with size $\alpha$ using a double Laplace inversion similar as in Section 5.2 of Zhang \citeyear{Z13}, where he considers the (absolute) drawdown process.

\subsubsection{\textbf{Occupation time of the relative drawdown of $V$ over size $\alpha$ until $V$ first hits $-a$}}
\textit{ }
\vspace{0.1cm}

Assume $x> 0$, and define the first relative drawdown over size $\alpha\in(0,1)$ for the Azema-Yor process $V$ as
\begin{align}
\eta_{\alpha, h}&:=\inf\left\{ t>0: \frac{\overline{V}_t -V_t}{ \overline{V}_t} \geq \alpha \right\}=\inf\left\{ t>0: X_t -\left((1-\alpha)(\overline{X}_t -x)+\alpha h(\overline{X}_t)< -y\right) \geq \alpha \right\},
\end{align}
where $y=(1-\alpha)x\geq -x$. If we take $g(u)=(1-\alpha)(u-x)+\alpha h(u)$, then $h(u)\leq  g(u) \leq u-x$ because $0\leq h(u)\leq u-x$. For $a>(1-\alpha)x$, we have
\begin{align}
G_{y}^{a,h, g}&=\int_0^{\tau_{h,a}} \1_{\left\{  \frac{\overline{V}_t -V_t}{ \overline{V}_t} > \alpha  \right\}}dt=\int_0^{\tau_{h,a}} \1_{\left\{  X_t -((1-\alpha)(\overline{X}_t-x)+\alpha h(\overline{X}_t))< -y  \right\}}dt,\label{a1}
\end{align}
whose Laplace transform is given by \eqref{main} with substitutions $(1-\alpha)(u-x)+\alpha h(u)\rightarrow g(u)$ and $(1-\alpha)x\rightarrow y$ there. The above result holds for an Azema-Yor process with general $h(\cdot)$. If we consider the Omega risk model with tax and set $h(u)=u-\overline{\gamma}(u)$, then \eqref{a1} represents the occupation time of the ``relative drawdown" of the after-tax process $V$ over size $\alpha$ until $V$ first hits $-a$.

\subsubsection{\textbf{Occupation time of the drawdown of $X$ until the first relative drawdown of $X$ over size $\alpha$}}
\textit{ }
\vspace{0.1cm}

Assuming $x>0$, if we take $h(u)=(1-\alpha)(u-x)$ and $a=(1-\alpha)x$ for $\alpha\in(0,1)$, then $\tau_{h,a}=\eta_{\alpha}$. If we take $g(u)=u-x \geq  h(u)$, then for $-x\leq y<(1-\alpha)x$, the Laplace transform of $G_{y}^{(1-\alpha)x, (1-\alpha)u, u-x}=\int_0^{\eta_{\alpha}}\1_{\lbrace \overline{X}_t-X_t > y\rbrace}dt$ is given by \eqref{main} with corresponding substitutions. It represents the occupation time of the drawdown process of $X$ above $y$ until the first relative drawdown of $X$ over size $\alpha$. The notation $C_{y}^a:=\int_0^{\sigma_a}\1_{\lbrace \overline{X}_t-X_t >y \rbrace}$ in equation (21) of Zhang \citeyear{Z13} measures the amount of time for the (absolute) drawdown process to finish the ``last trip" from $y$ to $a$. Our $G_{y}^{(1-\alpha)x, (1-\alpha)u, u-x}$ measures the occupation time of an absolute drawdown of more than $y$ until the first \textit{relative} drawdown over size $\alpha\in(0,1)$. This provides an alternative risk functional to measure both the absolute and relative drawdown risks.

\subsubsection{\textbf{Occupation time of the drawdown of $V$ until the first relative drawdown of $V$ over size $\alpha$}}
\textit{ }
\vspace{0.1cm}

Assuming $x>0$ and consider the Azema-Yor process $V$ with functional $h(\cdot)$. Since $\overline{V}_t=\overline{X}_t-h(\overline{X}_t)$, $P$-a.s., we have $\overline{V}_t-V_t=\overline{X}_t-X_t$, $P$-a.s., and that $V$ and $X$ have the same drawdown process.  If we take $\tilde{h}(u)=(1-\alpha)(u-x)+\alpha h(u)$ and $a=(1-\alpha)x$ for $\alpha\in(0,1)$, then $\tau_{\widetilde{h},a}=\eta_{\alpha, h}$. If we take $g(u)=u-x \geq  \tilde{h}(u)$, then for $-x\leq y<(1-\alpha)x$, the Laplace transform of $G_{y}^{(1-\alpha)x, \tilde{h}, u-x}=\int_0^{\eta_{\alpha, h}}\1_{\lbrace \overline{X}_t-X_t > y\rbrace}dt=\int_0^{\eta_{\alpha, h}}\1_{\lbrace \overline{V}_t-V_t > y\rbrace}dt$ is given by \eqref{main} with corresponding substitutions. If we take $h(u)=u-\overline{\gamma}(u)$, then it represents the occupation time of the drawdown process of the after-tax process $V$ above $y$ until the first relative drawdown of $V$ over size $\alpha$.

\subsection{Extending to integral functionals through time change}
Using the results in Theorem $1$ of Cui \citeyear{C13b} (or Theorem $3.2.1$ in the Ph.D. thesis Cui \citeyear{C13}), we are able to extend the Theorem \ref{mt} here to a more general integral functional.
%Previous literature considers the special case when $\omega(\cdot)$ is a constant or a piecewise constant(Albrecher, Gerber and Shiu \citeyear{AGS11}, Gerber, Shiu and Yang \citeyear{GSY12}, and Li and Zhou \citeyear{LZ13}). The motivation behind the extension is that the bankruptcy rate function should be a decreasing function(p.262 of Gerber, Shiu and Yang \citeyear{GSY12}), because economically the possibility of going bankrupt decreases if the surplus value increases.
The method is based on stochastic time change and the key steps are listed below. If we define a Boreal measurable function $b(x)>0$ and $\varphi_t=\int_0^t b^2(X_s)ds, t\geq 0$ to be consistent in notations, and assume some technical assumptions(Engelbert-Schmidt conditions), then from Theorem 1(i) of Cui \citeyear{C13b}, we have the following stochastic representation
\begin{align}
X_{t} &=S_{\int_{0}^{t} b^2(X_{s})ds}=S_{\varphi_t}, \quad P\text{-a.s.},\label{tc1}
\end{align}
and the process $S$ is a time-homogeneous diffusion satisfying the following SDE
\begin{align}
dS_{t} &=\frac{\mu(S_t)}{b^2(S_t)}dt +\frac{\sigma(S_t)}{b(S_t)}dB_t, \quad S_0=X_0=x.\label{xsde2}
\end{align}

 Let $\tau$ denote a $\mathcal{F}_t$-stopping time of $S_t$,  from Theorem 1(iii) of Cui \citeyear{C13b}, we have that $\varphi_{\tau}:=\int_0^{\tau} b^2(X_s)ds$ is a $\mathcal{G}_t$-stopping time and $\tau^S=\varphi_{\tau}$, $P$-a.s., where $\tau^S$ is the corresponding stopping time for $S_t$, and $\mathcal{G}_t=\mathcal{F}_{\varphi_t}$.

We have $\overline{X}_t:=\max\limits_{0\leq u\leq t}X_u= \max\limits_{0\leq u\leq t}S_{\varphi_u}= \max\limits_{0\leq u\leq \varphi_t} S_u=:\overline{S}_{\varphi_t}$, $P$-a.s.,  with the second equality due to \eqref{tc1} and the third equality due to continuity. Similarly as in \eqref{ay}, if we define $V^{\ast}_t =S_{t}-h(\overline{S}_{t})$ and $U^{\ast}_t =S_{t}-g(\overline{S}_{t})$, then $V_t=V^{\ast}_{\varphi_t}$, $P$-a.s., and $U_t=U^{\ast}_{\varphi_t}$, $P$-a.s. If we define a stopping time $\tau^{\ast}_{h, a}:=\inf\left\lbrace t> 0: V^{\ast}_t\leq -a     \right\rbrace$, then from Theorem 1(iii) of Cui \citeyear{C13b}, we have $\tau^{\ast}_{h, a}=\varphi_{\tau_{h, a}}$, $P$-a.s. Define the following integral functional $G_y^{a,h, g,b}:=\int_0^{\tau_{h, a}} b^2(X_t) \1_{\lbrace U_t<-y \rbrace}dt$. Apply the change of variables formula(Problem $3.4.5$ (vi), p.174 of Karatzas and Shreve \citeyear{KS91})
\begin{align}
G_y^{a,h, g,b}&:=\int_0^{\tau_{h, a}} b^2(X_t)\1_{\lbrace U_t<-y \rbrace}dt=\int_0^{\tau_{h, a}} b^2(X_t)\1_{\lbrace U^{\ast}_{\varphi_t}<-y \rbrace}dt\notag\\
&=\int_0^{\varphi_{\tau_{h, a}}} \1_{\lbrace U^{\ast}_t<-y \rbrace}dt\notag\\
&=\int_0^{\tau^{\ast}_{h, a}} \1_{\lbrace U^{\ast}_t<-y \rbrace}dt\notag\\
&=:G_y^{a,h, g, \ast},\label{kb}
\end{align}
where $G_y^{a,h, g, \ast}$ is the occupation time of $S$.
Thus we have translated the study of the integral functional $G_y^{a,h, g,b}$ to that of the occupation time of $U^{\ast}_t$ below $-y$ until $V^{\ast}_t$ first hits $-a$. Observe that $S_t$ is also a time-homogeneous diffusion with SDE \eqref{xsde2}, and we can apply Theorem \ref{mt} to $S_t$.

Define $\phi^{+, \ast}_q(\cdot)$ and $\phi^{-, \ast}_q(\cdot)$ respectively as the increasing and decreasing positive solutions of the \textit{Sturm-Liouville} ordinary differential equation for $S$: $\frac{1}{2}\sigma^2(x) f^{\prime\prime}(x)+\mu(x)f^{\prime}(x)=q b^2(x)f(x)$. Note that $S$ has the same scale function $s(\cdot)$ as $X$ (since $(\frac{\mu(\cdot)}{b^2(\cdot)})/(\frac{\sigma(\cdot)}{b(\cdot)})^2=\frac{\mu(\cdot)}{\sigma^2(\cdot)}$), there exists a positive constant $w^{\ast}_q$ such that $w^{\ast}_q s^{\prime}(x)=(\phi^{+, \ast}_q)^{\prime}(x) \phi^{-, \ast}_q(x)-(\phi^{+, \ast}_q)^{\prime}(x)\phi^{+, \ast}_q(x)$. Define the auxiliary functions $W^{\ast}_{q}(x,y):=\frac{1}{w^{\ast}_q}(\phi^{+, \ast}_q(x) \phi^{-, \ast}_q(y)-\phi^{+, \ast}_q(y)\phi^{-, \ast}_q(x))$, $W^{\ast}_{q,1}(x,y):=\frac{\partial}{\partial x}W^{\ast}_q(x,y)$ and $W^{\ast}_{q,2}(x,y):=\frac{\partial}{\partial y}W^{\ast}_{q,1}(x,y)$. Now we are in the position to provide the following general Laplace transform of $G_y^{a,h, g,b}$.

\begin{theorem}\label{mt2}

For $q\geq 0$, $-x\leq y<a$, if $g(u)\geq h(u)$ for $u\in[x,\infty)$, then we have
\begin{align}
E_x[ e^{-q G_y^{a,h, g, b}}; \tau_{h, a}<\infty ]&= E_x[ e^{-q \int_0^{\tau_{h, a}} b^2(X_t)\1_{\lbrace U_t<-y \rbrace}dt}; \tau_{h, a}<\infty ]\notag\\
 &=\int_x^{\infty} \frac{ \frac{s^{\prime}(m)}{W^{\ast}_q(g(m)-y, h(m)-a)}    }{1+\frac{s(m)-s(g(m)-y)}{s^{\prime}(g(m)-y)}\frac{W^{\ast}_{q,1}(g(m)-y, h(m)-a)}{W^{\ast}_q(g(m)-y, h(m)-a)}}\notag\\
&\quad \times \exp\left(-\int_x^m \frac{ \frac{s^{\prime}(u)}{s^{\prime}(g(u)-y)} \frac{W^{\ast}_{q,1}(g(u)-y, h(u)-a  )}{W^{\ast}_q(g(u)-y, h(u)-a  )}      }{1+ \frac{s(u)-s(g(u)-y)}{s^{\prime}(g(u)-y)}  \frac{W^{\ast}_{q,1}(g(u)-y, h(u)-a  )}{W^{\ast}_q(g(u)-y, h(u)-a  )}   } du   \right)    dm.\label{mainb}
\end{align}
\end{theorem}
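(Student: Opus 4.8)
The plan is to reduce the claim entirely to Theorem \ref{mt} via the stochastic time change already assembled in \eqref{tc1}--\eqref{kb}, so that no fresh occupation-time computation is required. First I would invoke the chain of identities culminating in \eqref{kb}, which shows that the integral functional $G_y^{a,h,g,b}$ of $X$ equals, $P$-almost surely, the plain occupation time $G_y^{a,h,g,\ast}=\int_0^{\tau^{\ast}_{h,a}}\1_{\{U^{\ast}_t<-y\}}dt$ of the time-changed diffusion $S$. The essential inputs here are $U_t=U^{\ast}_{\varphi_t}$ and $V_t=V^{\ast}_{\varphi_t}$ $P$-a.s.\ (both inherited from $X_t=S_{\varphi_t}$ together with $\overline{X}_t=\overline{S}_{\varphi_t}$), the identity $\tau^{\ast}_{h,a}=\varphi_{\tau_{h,a}}$ coming from Theorem 1(iii) of Cui \citeyear{C13b}, and the change-of-variables formula of Karatzas and Shreve \citeyear{KS91} used to absorb the factor $b^2(X_t)$ into the differential $d\varphi_t$.

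Second, I would verify that the diffusion $S$ defined by \eqref{xsde2} meets every hypothesis under which Theorem \ref{mt} was established, with $X$ replaced by $S$. Concretely: $S$ is a regular time-homogeneous diffusion on the same state space $J=(l,\infty)$; the processes $V^{\ast}_t=S_t-h(\overline{S}_t)$ and $U^{\ast}_t=S_t-g(\overline{S}_t)$ are Azema-Yor processes of exactly the form \eqref{ay} with the \emph{same} functionals $h,g$, so the standing constraints $0\le h(u)\le u-x$, $0\le g(u)\le u-x$, $h(x)=g(x)=0$ and $g\ge h$ carry over verbatim; and $\tau^{\ast}_{h,a}$ is precisely the first hitting time of $V^{\ast}$ to $-a$, matching definition \eqref{tauh} read for $S$. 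The range condition $-x\le y<a$ is untouched. Hence the full hypothesis package of Theorem \ref{mt} holds for $S$.

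Third, I would apply Theorem \ref{mt} to $S$ and read off the result. The crucial observation is that $S$ shares the scale function $s(\cdot)$ of $X$, since $(\mu/b^2)/(\sigma/b)^2=\mu/\sigma^2$, so the scale-function factors $s'$ and $s$ appearing in \eqref{main} are unchanged. What \emph{does} change is the Sturm-Liouville problem: the generator of $S$ sends $f\mapsto q\,b^2 f$ rather than $qf$, so its increasing and decreasing eigenfunctions are $\phi^{\pm,\ast}_q$ and the associated auxiliary functions are $W^{\ast}_q$, $W^{\ast}_{q,1}$ in place of $W_q$, $W_{q,1}$. Substituting these into the conclusion of Theorem \ref{mt} applied to $S$ reproduces \eqref{mainb} exactly.

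The main obstacle I anticipate is not the algebra---once the reduction \eqref{kb} is in hand the formula drops out mechanically---but rather the well-posedness of the time change and of the diffusion $S$. One must ensure that the Engelbert-Schmidt conditions invoked before \eqref{tc1} guarantee that $\varphi_t$ is a.s.\ finite and strictly increasing, that $S$ is a genuine regular diffusion to which the hitting-time and occupation-time machinery underlying Theorem \ref{mt} (via Lehoczky \citeyear{L77} and Zhang \citeyear{Z13}) applies, and that the running maxima, the Azema-Yor structure, and the stopping time $\tau_{h,a}$ all transport correctly under the (possibly singular) time change $\varphi$. Checking that these technical requirements are compatible with the coefficient conditions \eqref{cond1} is the delicate step; the remainder is a direct appeal to Theorem \ref{mt}.
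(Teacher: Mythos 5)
Your proposal is correct and follows exactly the paper's route: the paper proves Theorem \ref{mt2} by citing the reduction \eqref{kb} and then applying Theorem \ref{mt} to the time-changed diffusion $S$ of \eqref{xsde2}, with the same scale function $s(\cdot)$ and the starred Sturm--Liouville quantities $W^{\ast}_q$, $W^{\ast}_{q,1}$ replacing $W_q$, $W_{q,1}$. Your additional care about the well-posedness of the time change (Engelbert--Schmidt conditions, transport of $\overline{X}_t$, $U_t$, $V_t$ and $\tau_{h,a}$ under $\varphi$) is exactly the technical groundwork the paper assembles in the paragraphs preceding the theorem, so nothing is missing.
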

\begin{proof}
The proof follows from \eqref{kb} with Theorem \ref{mt} applied to $S$ in \eqref{xsde2}. \qed
\end{proof}
\begin{re}
 If $b(\cdot)=1$, then \eqref{mainb} reduces to \eqref{main}. The integral functional $G_y^{a,h, g, b}$ represents the conditional ``stochastic area" swept by $X$ until the first hitting time of an Azema-Yor process $V$ to $-a$, with the condition being that another Azema-Yor process $U$ stays below $-y$. If we take $h(u)=0$ and $g(u)=u-x$, then for $y^{\prime}=y+x\geq 0$ and $a>y$, we have $G_{y}^{a,0, u-x, b}=\int_0^{\tau_{-a}^{-}} b^2(X_t) \1_{\lbrace \overline{X}_t -X_t >y^{\prime}\rbrace}dt$, and it represents the stochastic area(with a drawdown constraint) swept by $X$ until $X$ first hits $-a$ from above.
\end{re}

\section{Examples\label{s3}}
In this section we illustrate the main results using a Brownian motion with drift: $X_t=\sigma B_t+\mu t, X_0=x=0$, with state space $J=(-\infty, \infty)$, where $\mu\neq 0, \sigma>0$. From Section $6.1$ of Zhang \citeyear{Z13}, denote $\delta:=\frac{\mu}{\sigma^2}, \gamma:=\sqrt{\delta^2+\frac{2q}{\sigma^2}}$, then $s(x)=\frac{1}{\delta}(1-e^{-2\delta x})$, $\phi_q^{+}(x)=e^{(\gamma-\delta)x}$,  $\phi_q^{-}(x)=e^{-(\gamma+\delta)x}$, $w_q=\gamma$, $W_q(x,y)=2e^{-\delta(x+y)} \frac{\sinh[\gamma(x-y)]}{\gamma}$, and $\frac{W_{q,1}(x,y)}{W_q(x,y)}=\gamma \coth[\gamma(x-y)]-\delta$.

In the following, we consider the Omega risk model with tax having a constant bankruptcy rate $\omega(\cdot)=\omega>0$ and a constant tax rate $\gamma(\cdot)=c\in[0,1)$. If we take $h(u)=g(u)=c u$,  then from Theorem \ref{mt}, for $c\in[0,1)$, $q\geq 0$, and $0\leq y<a$
\begin{align}
&E_0[ e^{-q G_y^{a,h, g}}; \tau_{h, a}<\infty ]= \int_0^{\infty} \frac{ \frac{s^{\prime}(m)}{W_q(c m-y, c m-a)}    }{1+\frac{s(m)-s(c m-y)}{s^{\prime}(c m-y)}\frac{W_{q,1}(c m-y, c m-a)}{W_q(c m-y, c m-a)}}\notag\\
&\quad \quad\times \exp\left(-\int_0^m \frac{ \frac{s^{\prime}(u)}{s^{\prime}(c u-y)} \frac{W_{q,1}(c u-y, c u-a  )}{W_q(c u-y, c u-a  )}      }{1+ \frac{s(u)-s(c u-y)}{s^{\prime}(c u-y)}  \frac{W_{q,1}(c u-y, c u-a  )}{W_q(c u-y, c u-a  )}   } du   \right)    dm\notag\\
&\quad\quad= \frac{2\delta\gamma e^{-\delta(a-y)}}{\gamma\cosh[\gamma(a-y)]-\delta \sinh[\gamma(a-y)]} \int_0^{\infty}  \frac{e^{2\delta m}}{B e^{2\delta(1-c)m+2\delta y} -1} \left( \frac{B e^{2\delta y} -1}{B e^{2\delta(1-c)m+2\delta y} -1} \right)^{\frac{1}{1-c}}dm, \label{eq1g}
\end{align}
where
\begin{align}
B&=\frac{\gamma\coth[\gamma(a-y)]+\delta}{\gamma\coth[\gamma(a-y)]-\delta}.
\end{align}

From Theorem \ref{omega1}, for $c\in[0,1)$, $p, q>0$, and $y\geq 0$
\begin{align}
&E_{0}\left[e^{-q \hat{\tau}_{\omega}}\right]%&=1-E_x\left[  e^{-\omega \int_0^{e_q} \1_{\lbrace U_s<-y\rbrace}ds}\right]\notag\\
=\exp\left( -\int_0^{\infty} \frac{W_{q,2}(c u-y,u)+W_{q,1}(u,  c u-y)\frac{\phi^{+\prime}_{q+\omega}( c u-y)}{\phi^{+}_{q+\omega}( c u-y)}}{W_{q,1}( c u-y,u)+W_{q}(u,  c u-y)\frac{\phi^{+\prime}_{q+\omega}( c u-y)}{\phi^{+}_{q+\omega}( c u-y)}}du    \right)\notag\\
&\quad + \int_0^{\infty} \exp\left( -\int_0^m \frac{W_{q,2}( c u-y,u)+W_{q,1}(u,  c u-y)\frac{\phi^{+\prime}_{q+\omega}( c u-y)}{\phi^{+}_{q+\omega}( c u-y)}}{W_{q,1}( c u-y,u)+W_{q}(u,  c u-y)\frac{\phi^{+\prime}_{q+\omega}( c u-y)}{\phi^{+}_{q+\omega}( c u-y)}} du       \right)\notag\\
&\quad\quad\quad\quad\quad \times \frac{ \frac{\omega}{q+\omega} s^{\prime}(m) \frac{\phi^{+\prime}_{q+\omega}( c m-y)}{\phi^{+}_{q+\omega}( c m-y)} }{W_{q,1}( c m-y,m)+W_{q}(m,  c m-y)\frac{\phi^{+\prime}_{q+\omega}( c m-y)}{\phi^{+}_{q+\omega}( c m-y)}}dm\notag\\
\quad \quad &= \frac{\omega}{q+\omega}(\gamma^{\prime}-\delta)e^{-\delta y}\int_0^{\infty} \frac{e^{c\delta m} (\gamma \cosh[\gamma y]+\gamma^{\prime}\sinh[\gamma y])^{\frac{1}{1-c}}}{(\gamma \cosh[\gamma(1-c)m+\gamma y]+\gamma^{\prime}\sinh[\gamma(1-c)m+\gamma y])^{\frac{1}{1-c}+1}}  dm,  \label{eq2g}
\end{align}
where $\gamma^{\prime}=\sqrt{\delta^2+\frac{2(q+\omega)}{\sigma^2}}$.
\begin{re}
We only manage to simplify \eqref{eq1g} and \eqref{eq2g} in terms of a one-dimensional integral, and we shall use numerical integration to evaluate them in practice. If we take the limit $c\rightarrow 1$ in \eqref{eq1g}, then it reduces to the expression in Corollary $6.2$ of Zhang \citeyear{Z13}. If we take $c=0$ in \eqref{eq1g}, then the integral  can be explicitly evaluated and we have $E_0[ e^{-q G_y^{a,h, g}}; \tau_{h, a}<\infty ]=\frac{\gamma e^{-\delta (a+ y)}}{\gamma\cosh[\gamma(a-y)]+\delta \sinh[\gamma(a-y)]}$, which agrees with formula $(2.2.5.1)$ on p.298 of Borodin and Salminen \citeyear{BS02} with substitutions $1\rightarrow \sigma$ here and $0\rightarrow x$, $-y\rightarrow r$, $-a\rightarrow z$, $q\rightarrow \gamma$, $\gamma\rightarrow \sqrt{2p}$, $\delta\rightarrow \sqrt{2q}$ there.

If we take $c=0$ in \eqref{eq2g}, then the integral can be explicitly evaluated and we have  $E_{0}\left[e^{-q \hat{\tau}_{\omega}}\right]=1-E_x\left[  e^{-\omega \int_0^{e_q} \1_{\lbrace X_s<-y\rbrace}ds}\right]=\frac{\omega(\gamma\cosh[\gamma y]+\gamma^{\prime}\sinh[\gamma y])}{(q+\omega)\gamma}\frac{\gamma^{\prime}-\delta}{\gamma+\gamma^{\prime}}e^{-(\gamma+\delta)y}$, which agrees with a result in Sec. $5.1$ of Li and Zhou \citeyear{LZ13} with substitutions $0\rightarrow y$, $1\rightarrow \sigma$ here and $\omega\rightarrow \lambda$, $q\rightarrow \delta$, $\gamma^{\prime}-\delta\rightarrow \beta_{\delta+\lambda}^+$, $-(\gamma+\delta)\rightarrow \beta_{\delta}^{-}$ there. Note that it also agrees with formula $(2.1.4.1)$ on p.254 of Borodin and Salminen \citeyear{BS02}.
\end{re}

\section{Conclusion and future research\label{s4}}

We have explicitly characterized the Laplace transform of the occupation time of an Azema-Yor process below a constant level until the first passage time of another Azema-Yor process or until an independent exponential time. As an application, we have obtained the explicit Laplace transform of the time of bankruptcy in the ``Omega risk model with surplus-dependent tax" proposed in this paper. Future research will be in extending the results in this paper to the ``risk model with tax and capital injection" introduced in Albrecher and Ivanovs \citeyear{AI14}, where the surplus value process is both refracted at its running maximum and reflected at zero. It would also be interesting to apply results in Section \ref{sc} to designing new risk functionals taking into account both the absolute and relative drawdown risks.

\footnotesize

\bibliography{omegatax}

\end{document}